\newtheorem{theorem}{Theorem}
\newtheorem{corollary}{Corollary}
\newtheorem{remark}{Remark}
\newtheorem{proposition}[theorem]{Proposition}
\newtheorem{definition}[theorem]{Definition} 
\newtheorem{lemma}[theorem]{Lemma}
\newtheorem{example}[theorem]{Example}
\begin{document}

\title[second generalized Hamming weight]{The second generalized
Hamming weight of some evaluation codes arising from a projective torus} 

\author[Gonz\'alez Sarabia]{Manuel Gonz\'alez Sarabia}
\address{Instituto Polit\'ecnico Nacional, 
UPIITA, Av. IPN No. 2580,
Col. La Laguna Ticom\'an,
Gustavo A. Madero C.P. 07340,
 M\'exico, D.F. 
Departamento de Ciencias B\'asicas.
}
\email{mgonzalezsa@ipn.mx}

\author[Camps]{Eduardo Camps}
\address{Instituto Polit\'ecnico Nacional,
ESFM, C.P. 07300, 
M\'exico, D.F.,
Departamento de Ma\-te\-m\'a\-ti\-cas.}
\email{ecfmd@hotmail.com}

\author[Sarmiento]{Eliseo Sarmiento}
\address{Instituto Polit\'ecnico Nacional,
ESFM, C.P. 07300, 
M\'exico, D.F.,
Departamento de Ma\-te\-m\'a\-ti\-cas.}
\email{esarmiento@ipn.mx}

\author[Villarreal]{Rafael H. Villarreal}
\address{Departamento de Matem\'aticas,
Centro de Investigaci\'on y de Estudios Avanzados del IPN,
Apartado Postal 14--740, 07000, 
Ciudad de  M\'exico.}
\email{vila@math.cinvestav.mx}

\begin{abstract}
In this paper we give a  formula for the second generalized
Hamming weight of certain eva\-lua\-tion codes arising from a projective
torus. This allows us to compute the corresponding weights of the
codes parameterized by the edges of a complete
bipartite graph. We determine some of the generalized Hamming weights
of 
non-degenerate evaluation codes arising from
a complete intersection in terms of the minimum distance, the degree
and the $a$-invariant. It is shown that the
generalized Hamming weights and the minimum distance have some 
similar behavior for parameterized codes  These results are used 
to find the complete weight hierarchy of some codes.      
\end{abstract}

\maketitle 

\section{Introduction}
Consider the system of two polynomial equations
\begin{eqnarray} \label{array}
& F_1(X_1,\ldots,X_s)=0, \\
& F_2(X_1,\ldots,X_s)=0, \nonumber
\end{eqnarray}
where $F_1$ and $F_2$ are linearly independent homogeneous
polynomials of degree $d$ in $s$ variables over a finite field
$\mathbb{F}_q$ with $q$ elements.  
The number of solutions of system (\ref{array}) in a projective torus
$\mathbb{T}_{s-1}$ (see Definition \ref{def2}) is given by 
$$
|Z_{\mathbb{T}_{s-1}}(F_1) \cap Z_{\mathbb{T}_{s-1}}(F_2)|,
$$
where $Z_{\mathbb{T}_{s-1}}(F_i)$ is the zero set of $F_i$ for
$i=1,2$ (see Section \ref{finalsection}). In this paper, we find the exact value of the integer
\begin{align*}
\max  \{|Z_{\mathbb{T}_{s-1}}(F_1) \cap Z_{\mathbb{T}_{s-1}}(F_2)|  : \, F_1,  \,&  F_2  \, {\mbox{are linearly independent homogeneous}} \\
& {\mbox{polynomials in}} \, s \, {\mbox{variables}}
\, {\mbox{of degree}} \, d\},
\end{align*}
that is, we solve the following problem: what is the maximum possible
number of solutions of system (\ref{array}) in a projective torus $\mathbb{T}_{s-1}$?

In \cite{Boguslavsky} Boguslavsky answered this question when we
replace the torus by the projective space $\mathbb{P}^{s-1}$: 

\begin{theorem}
The maximum possible number of solutions of system (\ref{array}) in the projective space $\mathbb{P}^{s-1}$ when $d<q-1$ is given by
$$
(d-1)q^{s-2}+p_{s-3}+q^{s-3},
$$
where $p_m=|\mathbb{P}^{m}|=q^m+q^{m-1}+\cdots+q+1$. When $d \geq q+1$ it is known that this number is $p_{s-1}$ \cite{sorensen}.
\end{theorem}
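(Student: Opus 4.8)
The statement being cited is Boguslavsky's, so I sketch how one proves it. The plan is to read the quantity as the maximum cardinality of the base locus $Z(F_1)\cap Z(F_2)$ of a pencil of degree-$d$ forms, and to prove a matching lower and upper bound (equivalently, this computes $p_{s-1}$ minus the second generalized Hamming weight of the projective Reed--Muller code of degree $d$). I first dispose of the range $d\ge q+1$: here the vanishing ideal of $\mathbb{P}^{s-1}$ already contains two linearly independent forms of degree $d$ (e.g.\ $X_i^{q}X_j-X_iX_j^{q}$ and a companion), so the whole space is a common zero set and the maximum is $p_{s-1}$. From now on I assume $d<q-1$, which in particular guarantees $d-1$ distinct scalars in $\mathbb{F}_q$ and lets me invoke Serre's inequality $|Z(F)|\le dq^{s-2}+p_{s-3}$ for a single nonzero degree-$d$ form.

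For the lower bound I exhibit forms that are products of linear forms. Put $G=\prod_{i=1}^{d-1}(X_2-c_iX_1)$ with the $c_i\in\mathbb{F}_q$ distinct, and set $F_1=X_1\,G$, $F_2=X_3\,G$. These are linearly independent of degree $d$, and since both vanish exactly when $G=0$ or when $X_1=X_3=0$, one has $Z(F_1)\cap Z(F_2)=Z(G)\cup M$ with $M=Z(X_1)\cap Z(X_3)\cong\mathbb{P}^{s-3}$. The $d-1$ hyperplanes cut out by $G$ all contain the common subspace $\{X_1=X_2=0\}\cong\mathbb{P}^{s-3}$, so inclusion--exclusion gives $|Z(G)|=(d-1)q^{s-2}+p_{s-3}$. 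On $M$ we have $X_1=0$, hence $G|_M=X_2^{\,d-1}$, so $Z(G)\cap M=\{X_1=X_2=X_3=0\}\cong\mathbb{P}^{s-4}$ has $p_{s-4}$ points. Therefore
\[
|Z(F_1)\cap Z(F_2)|=(d-1)q^{s-2}+p_{s-3}+\bigl(p_{s-3}-p_{s-4}\bigr)=(d-1)q^{s-2}+p_{s-3}+q^{s-3},
\]
matching the claimed value.

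For the upper bound, a first approximation comes from double counting over the pencil $\{\lambda F_1+\mu F_2\}$, indexed by $\mathbb{P}^1(\mathbb{F}_q)$. A point of the base locus lies on all $q+1$ members while any other point lies on exactly one, so $\sum_{M}|Z(M)|=q\,|Z|+p_{s-1}$. Feeding Serre's bound into each of the $q+1$ summands yields, after simplification, $|Z|\le(d-1)q^{s-2}+dq^{s-3}+p_{s-4}$, which overshoots the target $(d-1)q^{s-2}+2q^{s-3}+p_{s-4}$ by exactly $(d-2)q^{s-3}$. This slack is genuine: the $q+1$ members cannot all simultaneously attain Serre's maximum, and the naive count ignores this.

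To remove the slack I would argue by induction on $s$ via hyperplane sections, writing $N(s,d)$ for the asserted maximum. Each point of $\mathbb{P}^{s-1}$ lies on exactly $p_{s-2}$ of the $p_{s-1}$ hyperplanes, so $\sum_H|Z\cap H|=p_{s-2}\,|Z|$; for a hyperplane $H\cong\mathbb{P}^{s-2}$ on which $F_1|_H,F_2|_H$ remain linearly independent the inductive hypothesis bounds $|Z\cap H|\le N(s-1,d)$, and a direct check handles $s\le 3$. The entire difficulty concentrates in the \emph{degenerate} hyperplanes, on which $F_1|_H,F_2|_H$ become proportional (contributing up to the larger single-form bound $dq^{s-3}+p_{s-4}$) or both vanish (contributing a full $\mathbb{P}^{s-2}$). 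The heart of the proof --- and the step I expect to be the main obstacle --- is a separate lemma bounding both the number of degenerate hyperplanes and their excess contribution tightly enough that the weighted average over all $p_{s-1}$ hyperplanes still returns $N(s,d)$; concretely, one must show that a base locus exceeding the claimed bound would force more hyperplane restrictions to degenerate than a rank analysis of the pencil on the dual space permits.
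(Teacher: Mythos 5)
This theorem is not proved in the paper at all: it is Boguslavsky's result, quoted with citations (to \cite{Boguslavsky}, and to \cite{sorensen} for the range $d\ge q+1$), so your attempt has to stand on its own rather than be compared with an argument in the text. Its first half does stand: the case $d\ge q+1$ is fine (for $s\ge 3$ the vanishing ideal of $\mathbb{P}^{s-1}$ contains the two linearly independent degree-$d$ forms $X_1^{d-q-1}(X_1^qX_2-X_1X_2^q)$ and $X_1^{d-q-1}(X_1^qX_3-X_1X_3^q)$), and your lower-bound configuration $F_1=X_1G$, $F_2=X_3G$ with $G=\prod_{i=1}^{d-1}(X_2-c_iX_1)$ is correct: the inclusion--exclusion count giving $(d-1)q^{s-2}+p_{s-3}+q^{s-3}$ checks out. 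One caveat: your computation (like the theorem as stated) needs $d\ge 2$; for $d=1$ your own construction yields only $p_{s-3}$ points, which is indeed the true maximum for two independent linear forms, so the displayed formula must be read with $2\le d<q-1$.

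The upper bound, however, is a genuine gap, and it is the entire content of Boguslavsky's theorem. Your pencil double-count combined with Serre's inequality provably falls short by $(d-2)q^{s-3}$, as you yourself compute, and the proposed repair --- induction on $s$ by averaging $|Z\cap H|$ over hyperplanes --- is left at the level of a sketch whose crucial lemma (controlling the ``degenerate'' hyperplanes) is never formulated precisely, let alone proven. Worse, your own extremal example shows that this lemma cannot be a routine rank-counting patch: there the common factor $G$ divides every member of the pencil, so the base locus contains the $d-1$ entire hyperplanes $X_2=c_iX_1$, each contributing $|Z\cap H|=p_{s-2}$, which dwarfs both the inductive bound $N(s-1,d)$ and the single-form Serre bound you allow for degenerate sections. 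Bounding how many hyperplanes can lie inside (or meet excessively) the base locus, and by how much, is essentially equivalent to the theorem itself, so the argument as proposed is circular at its key step. What you have, then, is a correct construction showing the claimed value is attained, a correct diagnosis that the naive counting arguments are too weak, and no proof of maximality.
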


In our case, we are able to obtain (see Theorems \ref{theoremzeroes} and
\ref{otherbound}) that the maximum po\-ssi\-ble number of solutions
of system (\ref{array}) in $\mathbb{T}_{s-1}$ when $d<q-1$ is given by  
$$
(q-1)^{s-2}(d-1)+(q-1)^{s-3},
$$
and when $d \geq q-1$ we can find two appropriate polynomials that vanish at all
points of the projective torus, meaning that this number is
$|\mathbb{T}_{s-1}|=(q-1)^{s-1}$. Moreover, if we also require that $F_1$ and
$F_2$ do not vanish at all points of the torus $\mathbb{T}_{s-1}$, the
maximum number of possible solutions of system (\ref{array}) is given by (see Remarks \ref{dec21-17}
and \ref{finalremark})    
$$
(q-1)^{s-1}-\left\lceil (q-1)^{s-(k+2)}(q-1-l)\right\rceil -\left\lceil (q-1)^{s-(k+3)}(q-2)\right\rceil,
$$
where $s \geq 2$, $d \geq 1$, and $k$, $l$ are the unique integers
such that $d=k(q-2)+l$, $k \geq 0$, $1 \leq l \leq q-2$. These
results allow us to compute the second generalized Hamming weight of
some evaluation codes arising from a projective torus (see Theorem
\ref{theorem3}). This weight should not be confused with the second
Hamming weight, also called next--to--minimal weight, which was
computed by Carvalho in \cite[Theorem 2.4]{carvalho} in a more
general case (affine cartesian codes) when $2 \leq d < q-1$, $s \geq
3$. Recently, in \cite{carvalho2}, Carvalho and Neumann determine
many values of the second least weight codewords for affine cartesian
codes.           

The generalized Hamming weights of a linear code were introduced 
in \cite{helleseth}, \cite{klove}, and rediscovered by Wei in
\cite{wei}. The study of these weights is related to trellis coding,
$t$--resilient functions, and it was motivated from some applications
in cryptography. The weight hierarchy of a code has been examined in
several cases including the following families 
(see \cite{ashikhmin,amoros,GHW2014,guneri,Pellikaan,homma,Johnsen,munuera1,
munuera2,olaya,tsfasman,Geer,Yang} and the references therein):
\begin{enumerate}
\item Golay codes.
\item Product codes.
\item Codes from classical varieties: Reed--Muller codes, codes
from quadrics, Hermitian varieties, Grassmannians, Del Pezzo surfaces. 
\item Algebraic geometric codes.
\item Cyclic and trace codes: BCH, Melas.
\item Codes parameterized by the edges of simple graphs.
\end{enumerate} 

In this work we focus in the so called Reed-Muller-type codes
\cite{duursma,Gold}. 
These codes
are obtained by eva\-lua\-ting the linear space of homogeneous
$d$--forms on a subset of points $X$ of a projective space
$\mathbb{P}^{s-1}$ over a finite field. We denote this linear code by $C_X(d)$ (see
Definition \ref{def1}). If $X$ is the whole projective space we
obtain the projective Reed--Muller codes (see \cite{sorensen}). The
main parameters of $C_X(d)$ were computed in \cite{sarabia} when $X$
is the Segre va\-rie\-ty. The case of a Veronese
va\-rie\-ty was examined in \cite{renteria}. 

The main properties of $C_X(d)$ were studied in detail in 
\cite{duursma} and \cite{Hansen} when the set $X$ is a complete intersection.
In spite of the minimum distance in this case remains unknown, 
in \cite{Ballico}
and \cite{Gold} there are lower bounds for this basic parameter  and in
\cite{soprunov} there is a nice generalization of these results. Other
lower bounds can be found in \cite{min-dis-ci,tohaneanu-vantuyl}. Although
we do not know formulas for the minimum distance for general 
complete intersections, in some particular
cases explicit formulas have been determined (e.g., cartesian codes \cite{hiram},
codes parameterized by a projective torus \cite{ci-codes} 
or by a degenerate projective torus \cite{sarabia5}). One of our main
results gives formulas for some of the generalized Hamming weights
of non-degenerate Reed-Muller-type codes arising from
a complete intersection in terms of the minimum distance, the degree 
and the $a$-invariant (see Theorem \ref{theorem1}).                        

On the other hand, the notion of a code parameterized by a finite set of
monomials was introduced in
\cite{algcodes}. These are Reed-Muller-type codes where $X$ is a subgroup of the
projective torus $\mathbb{T}_{s-1}$. In Section \ref{monomials}, we show that the
generalized Hamming weights and the minimum distance have, in a
certain sense, similar 
behavior for parameterized codes  (see Theorem \ref{theorem2}).
 In particular we recover 
\cite[Proposition 5.2]{algcodes}.

The rest of the contents of this paper are as follows. In Section
\ref{preliminaries}, we introduce the definitions needed to
understand the main results. In Section
\ref{finalsection}, we show another of our main results which gives 
an explicit formula for the second generalized Hamming weight of
the Reed-Muller-type codes arising from a projective torus (see Theorem
\ref{theorem3}). 

If $\mathcal{G}$ is a simple graph (no loops or multiple
edges) and $X$ is the set parameterized by its edges, the code
$C_X(d)$ has been studied in several cases (see \cite{sarabia1},
\cite{sarabia2}, \cite{sarabia3}, \cite{sarabia4}, \cite{sarabia6},
\cite{GHW2014}, \cite{vaz}, \cite{neves}, \cite{algcodes}). As an
application we compute the second generalized
Hamming weight of the codes parameterized by the edges of any
complete bipartite graph (see Remark~\ref{dec-19-17}). Then we use our results  
to find the complete weight hierarchy of the codes
$C_{\mathbb{T}_2}(d)$ over a
finite field with $5$ elements (see Example \ref{example}).                  

\section{Preliminaries} \label{preliminaries}

Let $K=\mathbb{F}_q$ be a finite field with $q$ elements, 
let $\mathbb{P}^{s-1}$ be a projective space over $K$, and let
$X=\{P_1,\ldots,P_{|X|}\}$ be a subset of $\mathbb{P}^{s-1}$ where
$|X|$ is the cardinality of the set $X$. Let $S=K[X_1,\ldots,X_s]=\oplus_{d\geq 0} S_d$ be a polynomial ring with
the standard grading, where $S_d$ is the vector space generated by the
homogeneous polynomials in $S$ of degree $d$. Fix a degree $d\geq 1$. For each $i$ there is
$f_i \in S_d$ such that $f_i(P_i) \neq 0$. Indeed suppose
$P_i=[t_1:\cdots:t_s]$, there is at least one $k \in \{1,\ldots,s\}$
such that $t_k \neq 0$. 
Setting $f_i=X_k^d$ one has that $f_i \in S_d$ and $f_i(P_i) \neq 0$.
Consider the evaluation map       
\begin{align*}
{\rm{ev}}_d: S_d \longrightarrow K^{|X|}, \hspace{1cm} \\
f \mapsto \left( \frac{f(P_1)}{f_1(P_1)},\ldots,\frac{f(P_{|X|})}{f_{|X|}(P_{|X|})}\right).
\end{align*}

This is a linear map between the $K$-vector spaces $S_d$ and $K^{|X|}$.
\begin{definition}\label{def1}
The evaluation code or Reed--Muller-type code of order $d$ 
asso\-ciated to $X$, denoted $C_X(d)$, is the image of ${\rm{ev}}_d$, that is
$$
C_X(d)=\left\{\left(\frac{f(P_1)}{f_1(P_1)},\ldots,
\frac{f(P_{|X|})}{f_{|X|}(P_{|X|})}\right): f \in S_d\right\}.
$$
\end{definition}

\begin{lemma}{\rm\cite[Lemma~2.13]{hilbert-min-dis}}\label{may21-15-1}
{\rm (a)} The map ${\rm ev}_d$ is independent of
the set of representatives that we choose for the points of
$X$. {\rm (b)} The basic parameters of the Reed-Muller-type code
$C_X(d)$ are independent of $f_1,\ldots,f_{|X|}$.
\end{lemma}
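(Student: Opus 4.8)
The plan is to exploit the fact that $f$ and each $f_i$ are homogeneous of the \emph{same} degree $d$, and then to reduce part (b) to the elementary observation that an invertible diagonal change of coordinates preserves every weight-theoretic invariant of a linear code.

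For part (a), I would replace a chosen representative $P_i=[t_1:\cdots:t_s]$ by an arbitrary scalar multiple $\lambda P_i$ with $\lambda\in K\setminus\{0\}$. Since $f,f_i\in S_d$ are homogeneous of degree $d$, we have $f(\lambda P_i)=\lambda^d f(P_i)$ and $f_i(\lambda P_i)=\lambda^d f_i(P_i)$, so the $i$-th coordinate of ${\rm ev}_d(f)$ computed with $\lambda P_i$ is
$$
\frac{\lambda^d f(P_i)}{\lambda^d f_i(P_i)}=\frac{f(P_i)}{f_i(P_i)},
$$
which is precisely the coordinate obtained from $P_i$. As $i$ and $\lambda$ are arbitrary, ${\rm ev}_d$ is independent of the chosen representatives.

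For part (b), I would fix a second admissible family $g_1,\ldots,g_{|X|}$ (that is, $g_i\in S_d$ with $g_i(P_i)\neq 0$) and let $C'$ denote the resulting code. For every $f\in S_d$ and every index $i$,
$$
\frac{f(P_i)}{g_i(P_i)}=\alpha_i\,\frac{f(P_i)}{f_i(P_i)},\qquad \alpha_i:=\frac{f_i(P_i)}{g_i(P_i)}\in K\setminus\{0\},
$$
where $\alpha_i$ is nonzero precisely because both $f_i(P_i)$ and $g_i(P_i)$ are nonzero. Hence, writing $D=\mathrm{diag}(\alpha_1,\ldots,\alpha_{|X|})$, the two evaluation maps are related by ${\rm ev}_d^{\,g}=D\circ{\rm ev}_d^{\,f}$, so $C'=D\cdot C_X(d)$: the two codes differ only by multiplication by an invertible diagonal matrix.

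The final step is to record that such a transformation fixes all the basic parameters. Multiplication by $D$ is a linear automorphism of $K^{|X|}$, so the length $|X|$ and the dimension of $C_X(d)$ are unchanged; and because every $\alpha_i$ is nonzero, $D$ leaves the support of each vector intact, hence preserves the Hamming weight of every codeword and therefore the minimum distance. I expect no genuine obstacle in this argument; the only point worth stating carefully is that, by the same support-preservation property applied to the union of supports of a subcode, the entire weight hierarchy $d_1,d_2,\ldots$ of $C_X(d)$ is invariant as well, and not merely the minimum distance.
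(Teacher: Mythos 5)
Your proof is correct. Note that the paper does not prove this lemma at all---it quotes it from \cite[Lemma~2.13]{hilbert-min-dis}---and your argument (homogeneity of degree-$d$ forms for part (a), and for part (b) the observation that changing $f_1,\ldots,f_{|X|}$ to another admissible family multiplies the code by an invertible diagonal matrix, which preserves length, dimension, supports, and hence the minimum distance and indeed the whole weight hierarchy) is precisely the standard proof of this fact, so there is nothing to add.
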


The basic parameters of $C_X(d)$ are related to the algebraic
invariants of the quotient ring $S/I_X$, where $I_X$ is the vanishing
ideal of $X$ (see for example \cite{sarabia4,vaz,algcodes}). Indeed, the
dimension of $C_X(d)$ is given by the Hilbert function of $S/I_X$,
that is, $\dim_K (C_X(d))=H_X(d)$, the length $|X|$ of $C_X(d)$ is
given by the degree, or the multiplicity of $S/I_X$. Moreover, the
regularity index of $H_X$ is the Castelnuovo--Mumford regularity of $S/I_X$.
Recall that the  $a$-invariant of $S/I_X$, denoted $a_X$, is 
the regularity index minus $1$. Unfortunately there is no algebraic
invariant of $S/I_X$ which is equal to the generalized Hamming weight
of $C_X(d)$. However we can express the minimum distance in algebraic
terms, as was recently shown in \cite{hilbert-min-dis}.

\begin{definition} \label{def2} Let $K^*=K \setminus \{0\}$ be the
multiplicative group of $K$. 
The projective torus of $\mathbb{P}^{s-1}$, denoted
$\mathbb{T}_{s-1}$, is given by
$$
\mathbb{T}_{s-1}=\{[t_1: \cdots : t_s] \in \mathbb{P}^{s-1}: t_i \in
K^*\}. 
$$
\end{definition}

Moreover, a projective torus admits a natural generalization as we now
explain. Let $L=K[Z_1,\ldots,Z_n]$ be a polynomial ring over the field $K$
and let $Z^{a_1},\ldots,Z^{a_s}$ be a finite set of mo\-no\-mials.  
As usual if $a_i=(a_{i1},\ldots,a_{in})\in\mathbb{N}^n$, where $\mathbb{N}$ stands for the non-negative integers, 
then we set
\begin{center}
$Z^{a_i}=Z_1^{a_{i1}}\cdots Z_n^{a_{in}} \, \, \mbox{ for } \, \, i=1,\ldots,s$.
\end{center}
\quad Consider the following set
parameterized  by these mo\-no\-mials 
$$
X=\left\{\left[t_1^{a_{11}}\cdots t_n^{a_{1n}}:\cdots:t_1^{a_{s1}}\cdots t_n^{a_{sn}}\right]\in\mathbb{P}^{s-1}	\vert\, t_i\in K^*\right\}.
$$
 
We call $X$ a {\it toric set parameterized by monomials\/} $Z^{a_1},\ldots,Z^{a_s}$, and 
the co\-rres\-pon\-ding Reed-Muller-type code $C_X(d)$ 
is called a {\it code parameterized by mo\-no\-mials\/} or simply a {\it parameterized
code\/}. In this situation $X$ is a subgroup of
the projective torus $\mathbb{T}_{s-1}$. Recall that the basic parameters of
$C_X(d)$ are independent of $f_1,\ldots,f_{|X|}$ (Lemma~\ref{may21-15-1}). Since all entries of
$P_1,\ldots,P_{|X|}$ are non-zero, we can set $f_i=X_1^d$ for $i=1,\ldots,|X|$ in
Definition \ref{def1} to obtain a simple expression for $C_X(d)$. If the monomials in the
pa\-ra\-me\-te\-ri\-za\-tion are given by the edges of a
simple graph $\mathcal{G}$ \cite{harary}, that is,
$Z^{a_1},\ldots,Z^{a_s}$ is the set of all monomials $Z_iZ_j$ such
that $\{Z_i,Z_j\}$ is an edge of $\mathcal{G}$, we say that $C_X(d)$ is parameterized by
the edges of $\mathcal{G}$ (see \cite[Definitions (2) and
(4)]{sarabia6}). Note that if we consider the monomials $Z_1, \ldots, Z_s$,
the set $X$ parameterized by them is precisely the projective torus $\mathbb{T}_{s-1}$.        

One of the important cases studied here is when $X$ is a complete
intersection. For convenience we recall this notion.
\begin{definition}
A set $X \subset {\mathbb{P}^{s-1}}$ is called a (zero--dimensional
ideal--theoretic) complete intersection if the vanishing ideal $I_X$
of $X$ is generated by a regular sequence of $s-1$ elements.
\end{definition}

Next we define the generalized Hamming
weights introduced in \cite{helleseth,klove,wei}, also known as
higher weights, 
effective lengths or Wei weights.   
\begin{definition}
If $\mathcal{B}$ is subset of $K^{|X|}$, the support of this set is 
$$
{\mbox{supp}} \, (\mathcal{B})=\{i: {\mbox{there exists}} \,\, (w_1,\ldots,w_{|X|}) \in \mathcal{B} \,\, {\mbox{such that}} \,\, w_i \neq 0 \}.
$$
The $r$th generalized Hamming weight of the code $C_X(d)$ is given by
$$
d_r(C_X(d))=\min \, \{|{\mbox{supp}} \, (\mathcal{D})|: \mathcal{D} \,\, {\mbox{is a subcode of}} \,\, C_X(d) \,\, {\mbox{and}}\,\, \dim_K \mathcal{D}=r\}, 
$$
for $1 \leq r \leq H_X(d)$. The weight hierarchy of the code $C_X(d)$ is the set of integers $\{d_r(C_X(d)):1 \leq r \leq H_X(d)\}$.
\end{definition}

It is an straightforward fact that $d_1(C_X(d))$ is precisely the
minimum distance of the evaluation code $C_X(d)$. We say
that $C_X(d)$ is an $r$--MDS code if the Singleton-type bound (see
\cite[Corollary 3.1]{tsfasman}) is attained, that is   
$$
d_r(C_X(d))=|X|-H_X(d)+r.
$$

The following notion of a non-degenerate code will play a role here.
\begin{definition}
If $C \subset K^{|X|}$ is a linear code and $\pi_i$ is the $i$-th
projection map 
$$
\pi_i: C \rightarrow K,\ \ \ (v_1,\ldots,v_{|X|})\mapsto v_i,
$$
for $i=1,\ldots,|X|$, we say that $C$ is degenerate if for some $i$
the image of $\pi_i$ 
is zero, otherwise it is called non-degenerate.  
\end{definition}

If $C$ is
non-degenerate then $d_k(C)=|X|$, where $k$ is the dimension of $C$
as a linear subspace of $K^{|X|}$. 

\section{Evaluation codes associated to complete intersections} \label{intersection}
\begin{lemma} \label{lemma1}
Let $X \subset \mathbb{P}^{s-1}$ be a complete intersection and let 
$a_X$ be the $a$--invariant of $S/I_X$. Then
\begin{enumerate}
\item $d_1(C_X(a_X))=2$.
\item $H_X(d)+H_X(a_X-d)=|X|$.
\end{enumerate}
\end{lemma}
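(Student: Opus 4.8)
The plan is to extract both identities from the Hilbert series of the complete intersection and then convert (1) into a statement about the existence of a form vanishing at all but one point of $X$. Write $I_X=(g_1,\dots,g_{s-1})$, where $g_1,\dots,g_{s-1}$ is a regular sequence with $\deg g_i=d_i$. Since $S/I_X$ is a one-dimensional Cohen--Macaulay ring, its Hilbert series is
$$
\sum_{d\ge 0}H_X(d)\,t^d=\frac{\prod_{i=1}^{s-1}(1-t^{d_i})}{(1-t)^s}=\frac{h(t)}{1-t},\qquad h(t)=\prod_{i=1}^{s-1}\bigl(1+t+\cdots+t^{d_i-1}\bigr).
$$
Thus $H_X(d)=\sum_{j=0}^{d}h_j$, where $h_j$ is the coefficient of $t^j$ in $h$. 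Two features of $h$ will do all the work: its degree equals $r_X:=\sum_{i=1}^{s-1}(d_i-1)$, the regularity index, so that $a_X=r_X-1$ and $h_{r_X}=1$; and $h$ is palindromic, that is $h_j=h_{r_X-j}$ for all $j$, being a product of palindromic factors. I would record first that $h(1)=\prod_i d_i=\deg(S/I_X)=|X|$.

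For part (2), I would simply use the palindromy. Reindexing the second sum by $k=r_X-j$ gives
$$
H_X(d)+H_X(a_X-d)=\sum_{j=0}^{d}h_j+\sum_{j=0}^{r_X-1-d}h_{r_X-j}=\sum_{j=0}^{d}h_j+\sum_{k=d+1}^{r_X}h_k=\sum_{j=0}^{r_X}h_j=h(1)=|X|,
$$
which is the asserted identity (the boundary cases $d<0$ and $d\ge r_X$ are checked directly, since then one term is $0$ and the other is $|X|$).

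For part (1), I would first specialize (2) at $d=a_X$: since $H_X(0)=h_0=1$, it yields $H_X(a_X)=|X|-1$, so $C_X(a_X)$ has codimension one in $K^{|X|}$. Any codimension-one code is the kernel of a single functional $v\mapsto\sum_i c_iv_i$, and such a code has minimum distance $1$ or $2$ according to whether or not some $c_i$ vanishes; in particular $d_1(C_X(a_X))\le 2$. It remains to exclude a codeword of weight one. By the definition of $\mathrm{ev}_{a_X}$, a weight-one codeword supported at $P_i$ is exactly a form $f\in S_{a_X}$ with $f(P_j)=0$ for all $j\ne i$ and $f(P_i)\ne 0$; equivalently $X\setminus\{P_i\}$ fails to impose independent conditions on forms of degree $a_X$, i.e. $H_{X\setminus\{P_i\}}(a_X)<|X|-1$.

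The heart of the matter, and the step I expect to be the main obstacle, is showing that no such $f$ exists. This is precisely the Cayley--Bacharach property of the complete intersection $X$ in the top admissible degree: the complementary degree is $a_X-a_X=0$, and a single point imposes an independent condition on the constants, so the Cayley--Bacharach theorem for complete intersections forces $X\setminus\{P_i\}$ to impose independent conditions in degree $a_X$; hence $H_{X\setminus\{P_i\}}(a_X)=|X|-1$ and $(I_{X\setminus\{P_i\}})_{a_X}=(I_X)_{a_X}$. Consequently there is no weight-one codeword and $d_1(C_X(a_X))\ge 2$, which together with the upper bound gives $d_1(C_X(a_X))=2$. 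If a self-contained route is preferred, the same conclusion can be reached by passing to the Artinian reduction of $S/I_X$ modulo a general linear form and using the symmetry of its socle, which is also the origin of the palindromy of $h$ exploited in part (2).
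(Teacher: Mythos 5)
Your proof is correct, but it is worth noting what the paper actually does here: its entire proof of Lemma \ref{lemma1} is the citation ``\cite[Proposition 2.7]{duursma} and \cite[Lemma 3]{sarabia7}'', so you have in effect reconstructed, self-contained, the content of those two references. Your part (2) is the standard palindromy of the $h$-vector of a graded complete intersection (Gorenstein symmetry), and the bookkeeping with the regularity index, $h(1)=\prod_i d_i=|X|$, and the boundary cases is right. Your part (1) correctly reduces $d_1(C_X(a_X))=2$ to two facts: a code of codimension one in $K^{|X|}$ is the kernel of a single functional and so has minimum distance $1$ or $2$ according to whether the functional has a zero coordinate, and weight-one codewords correspond exactly to forms of degree $a_X$ vanishing on $X\setminus\{P_i\}$ but not at $P_i$; you then exclude the latter by Cayley--Bacharach in the extremal degree, with the complementary degree $a_X-a_X=0$ handled correctly (a single point imposes an independent condition on constants). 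This is the same mechanism used in \cite{duursma}, so the mathematics agrees with the sources the paper leans on; what your write-up buys is transparency about where the Gorenstein property enters in each half. One caution: the Cayley--Bacharach theorem is usually stated over an algebraically closed (or infinite) field, whereas here $K=\mathbb{F}_q$. This is harmless---either invoke the liaison/Gorenstein-duality form of the theorem, which is field-agnostic, or base-change to $\overline{K}$: since the points of $X$ are $K$-rational, $I_X$, its Hilbert function, and the complete intersection property are preserved, and nonexistence of a separating form over $\overline{K}$ implies it over $K$. The same caution applies, more seriously, to your proposed alternative via ``Artinian reduction modulo a general linear form'': over $\mathbb{F}_q$ a linear form avoiding all points of $X$ need not exist, so that variant must also be run after extending the field.
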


\begin{proof}
\cite[Proposition 2.7]{duursma} and \cite[Lemma 3]{sarabia7}. 
\end{proof} 

\begin{proposition} \label{proposition1}
Let $X \subset \mathbb{P}^{s-1}$ be a complete intersection and $a_X$ be the $a$--invariant of $S/I_X$. Then the $r$th generalized Hamming weight of $C_X(a_X)$ is
$$
d_r(C_X(a_X))=r+1,
$$
for $r=1,\ldots,H_X(a_X)=|X|-1$.
\end{proposition}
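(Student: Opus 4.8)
The plan is to squeeze the entire weight hierarchy between two bounds that coincide, using both parts of Lemma~\ref{lemma1} together with the general monotonicity property of generalized Hamming weights. First I would pin down the dimension of the code. Applying Lemma~\ref{lemma1}(2) with $d=a_X$ gives $H_X(a_X)+H_X(0)=|X|$, and since $H_X(0)=\dim_K(S/I_X)_0=1$, this yields $H_X(a_X)=|X|-1$. Thus $C_X(a_X)$ is a code of length $|X|$ and dimension $k=|X|-1$, so the admissible range $1\le r\le H_X(a_X)=|X|-1$ is exactly the one in the statement.

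For the upper bound I would invoke the Singleton-type bound of \cite[Corollary 3.1]{tsfasman}, which reads
$$
d_r(C_X(a_X))\le |X|-H_X(a_X)+r.
$$
Substituting $H_X(a_X)=|X|-1$ collapses the right-hand side to $|X|-(|X|-1)+r=r+1$, so $d_r(C_X(a_X))\le r+1$ for every $r$ in the range.

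For the matching lower bound I would use the strict monotonicity of the generalized Hamming weights established by Wei \cite{wei}: for any linear code one has $d_1<d_2<\cdots<d_k$, whence $d_r\ge d_1+(r-1)$. By Lemma~\ref{lemma1}(1) the minimum distance is $d_1(C_X(a_X))=2$, so
$$
d_r(C_X(a_X))\ge 2+(r-1)=r+1.
$$
Comparing this with the upper bound forces equality $d_r(C_X(a_X))=r+1$, which in particular shows that $C_X(a_X)$ is $r$-MDS for all $r$.

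The computation is short once these tools are assembled, and there is no serious analytic obstacle. The one point that needs care is the reduction $H_X(a_X)=|X|-1$, where the complete-intersection hypothesis enters through Lemma~\ref{lemma1}(2); the conceptual heart is the observation that the two facts $d_1=2$ and $\dim C_X(a_X)=|X|-1$ trap the whole hierarchy between a Singleton ceiling and a monotonicity floor that differ by nothing, so no finer analysis of individual subcodes is required.
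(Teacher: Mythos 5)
Your proof is correct and takes essentially the same route as the paper: both arguments extract $H_X(a_X)=|X|-1$ and $d_1(C_X(a_X))=2$ from Lemma~\ref{lemma1} and then trap the entire hierarchy using Wei's strict monotonicity $d_1<d_2<\cdots<d_{|X|-1}$. The only cosmetic difference is that you cite the Singleton-type bound for the upper estimate $d_r\le r+1$, whereas the paper reaches the same conclusion by counting that $|X|-1$ strictly increasing integers starting at $2$ and bounded above by $|X|$ must be exactly $2,3,\ldots,|X|$.
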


\begin{proof}
We set $d_r:=d_r(C_X(a_X))$. The claim follows immediately from Lemma \ref{lemma1} and the fact that
$$
2=d_1<d_2<\cdots<d_{|X|-1} \leq |X|. 
$$ 
\end{proof}

\begin{remark} \label{remark1}
Let $d \geq a_X+1$. In these cases $C_X(d)=K^{|X|}$ and therefore $d_r(C_X(d))=r$ for all $r=1,\ldots,H_X(d)=|X|$.
\end{remark}

\begin{theorem} \label{theorem1}
Let $X$ be a complete intersection and let $a_X$ be its
$a$--invariant. Suppose that the codes $C_X(d)$ and $C_X(a_X-d)$ are
non-degenerate for all $1 \leq d <a_X$. Moreover let
$\beta:=H_X(d)$, $\beta \geq d_1(C_X(a_X-d))$. Then  
$$
d_{\beta-i}(C_X(d))=|X|-i,
$$
for $i=0,\ldots,d_1(C_X(a_X-d))-2$.
\end{theorem}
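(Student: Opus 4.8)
The plan is to translate the generalized Hamming weights of $C_X(d)$ into a purely numerical question about Hilbert functions of subsets of $X$, and then to settle that question using the Cayley--Bacharach property of the complete intersection $X$. Write $n=|X|$ and set $\delta:=d_1(C_X(a_X-d))$. For $Y\subseteq X$, let $C_X(d)(Y)$ denote the subcode of codewords vanishing at every coordinate indexed by $Y$. The projection of $C_X(d)$ onto the coordinates of $Y$ is exactly $C_Y(d)$ (by Lemma~\ref{may21-15-1} the normalizing factors play no role), and $\dim C_Y(d)=H_Y(d)$, so $\dim C_X(d)(Y)=\beta-H_Y(d)$. Combining this with the standard identity $d_r(C)=n-\max\{|Z|:\dim C(Z)\ge r\}$, where $C(Z)$ is the subcode vanishing on the coordinates in $Z$, I obtain
\[
d_{\beta-i}(C_X(d))=n-\max\{\,|Y|:Y\subseteq X,\ H_Y(d)\le i\,\}.
\]
Thus the theorem is equivalent to showing that for $0\le i\le\delta-2$ the largest $Y$ with $H_Y(d)\le i$ has exactly $i$ elements.

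The bound $\max\ge i$ is clear, since any $Y$ with $|Y|=i$ obeys $H_Y(d)\le|Y|=i$; together with the non-degeneracy of $C_X(d)$ (which gives the case $i=0$, i.e.\ $d_\beta(C_X(d))=n$) this yields $d_{\beta-i}(C_X(d))\le n-i$. For the reverse inequality the key claim is: if $Y\subseteq X$ with $|Y|\le\delta-1$, then $Y$ imposes independent conditions on the forms of degree $d$, that is $H_Y(d)=|Y|$. Granting this, let $Y$ satisfy $H_Y(d)\le i\le\delta-2$. If $|Y|\ge\delta$, pick $Y'\subseteq Y$ with $|Y'|=\delta-1$; then $H_{Y'}(d)=\delta-1$ by the claim, whereas $H_{Y'}(d)\le H_Y(d)\le\delta-2$, a contradiction. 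Hence $|Y|\le\delta-1$, and the claim forces $|Y|=H_Y(d)\le i$. This gives $\max\le i$, and the two bounds yield $d_{\beta-i}(C_X(d))=n-i$. (The hypothesis $\beta\ge\delta$ is used only to keep each index $\beta-i\ge 2$, so that the minimum distance $d_1$ is never at stake.)

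It remains to establish the claim, which is the heart of the matter. Let $|Y|\le\delta-1$. A codeword of $C_X(a_X-d)$ that vanishes on $X\setminus Y$ is supported inside $Y$, hence has weight at most $|Y|\le\delta-1<\delta$; since $\delta$ is the minimum distance of $C_X(a_X-d)$, such a codeword is zero. Therefore the restriction map $C_X(a_X-d)\to C_{X\setminus Y}(a_X-d)$ is injective, so $H_{X\setminus Y}(a_X-d)=H_X(a_X-d)$. Now $X$, being a complete intersection, is arithmetically Gorenstein and so satisfies the Cayley--Bacharach property, which measures the failure of independent conditions in the complementary degree:
\[
|Y|-H_Y(d)=H_X(a_X-d)-H_{X\setminus Y}(a_X-d).
\]
The right-hand side vanishes, so $H_Y(d)=|Y|$. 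I expect this Cayley--Bacharach identity to be the main obstacle: it is the subset refinement of Lemma~\ref{lemma1}(2) (to which it reduces when $Y=X$), it is the single place where the complete intersection hypothesis and the invariant $a_X$ enter, and it is exactly what converts the minimum-distance bound on $C_X(a_X-d)$ into the independence statement for $C_X(d)$.

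As a cross-check, the same conclusion follows from the duality $C_X(d)^{\perp}\cong C_X(a_X-d)$, valid up to a monomial equivalence (again a consequence of the Gorenstein pairing, and compatible with the dimension count in Lemma~\ref{lemma1}(2)) that preserves every generalized Hamming weight. Feeding $d_1(C_X(d)^{\perp})=\delta$ into Wei's duality theorem \cite{wei}, the $\delta-1$ largest integers $n,n-1,\dots,n-\delta+2$ cannot occur among the values $n+1-d_r(C_X(d)^{\perp})$, whose maximum is $n+1-\delta$; hence they must be the top weights $d_\beta>d_{\beta-1}>\cdots>d_{\beta-\delta+2}$ of $C_X(d)$, which is precisely the assertion of the theorem.
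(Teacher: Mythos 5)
Your proposal is correct, but your main argument reaches the theorem by a genuinely different route than the paper. The paper never looks at subsets of $X$: it cites the fact that $C_X(d)^{\perp}$ and $C_X(a_X-d)$ have the same weight hierarchy and then applies the Tsfasman--Vl\u{a}du\c{t} corollaries (themselves consequences of Wei's duality theorem) to conclude that $C_X(d)$ is $r$--MDS for every $r\geq \beta+2-d_1(C_X(a_X-d))$, which yields $d_{\beta-i}(C_X(d))=|X|-i$ at once; your final ``cross-check'' paragraph is essentially this proof, with Wei's duality invoked directly rather than through the $r$--MDS corollaries. Your main argument instead converts the statement into the combinatorial claim that every $Y\subseteq X$ with $H_Y(d)\leq i\leq d_1(C_X(a_X-d))-2$ has $|Y|\leq i$, and proves that claim by feeding the minimum distance of $C_X(a_X-d)$ into the Cayley--Bacharach (Davis--Geramita--Orecchia) identity $|Y|-H_Y(d)=H_X(a_X-d)-H_{X\setminus Y}(a_X-d)$ for arithmetically Gorenstein point sets. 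That identity is correct under the paper's normalization $a_X=(\mbox{regularity index})-1$: it follows from liaison, $I_{X\setminus Y}/I_X\cong \omega_{S/I_Y}(-a_X)$, combined with $\dim_K(\omega_{S/I_Y})_j=|Y|-H_Y(-j)$, and it specializes to Lemma \ref{lemma1}(2) when $Y=X$, exactly as you observe. What your route buys: it makes completely transparent where the complete intersection (Gorenstein) hypothesis enters, it gives a subset-level refinement of Lemma \ref{lemma1}(2), and it bypasses Wei duality and the dual-code identification altogether; it also isolates the clean geometric fact that any $d_1(C_X(a_X-d))-1$ points of $X$ impose independent conditions on degree-$d$ forms. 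The price is that the Cayley--Bacharach identity is itself a nontrivial external theorem, of comparable depth to the results the paper cites from \cite{sarabia7} and \cite{tsfasman}, so your argument is differently grounded rather than more elementary; since you invoke it explicitly and correctly, there is no gap.
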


\begin{proof}
Notice that $d_{\beta}(C_X(d))=|X|$, because $C_X(d)$ is non-degenerate. Moreover $C_X(d)^{\perp}$, the dual code of $C_X(d)$, and $C_X(a_X-d)$ have the same weight hierarchy (see \cite[Theorem 2]{sarabia7}).
By using \cite[Corollary 4.1]{tsfasman}, we obtain that $C_X(d)$ is an $r$--MDS code if $r$ is given by
\begin{align*}
r & = |X|+2-H_X(a_X-d)-d_1(C_X(a_X-d)) \\
& =|X|+2-(|X|-\beta)-d_1(C_X(a_X-d))  \\
& =\beta+2-d_1(C_X(a_X-d)). 
\end{align*}

Actually for this $r$ we get
\begin{equation} \label{equa1}
d_r(C_X(d))=|X|+2-d_1(C_X(a_X-d)).
\end{equation}

As $C_X(d)$ is $s$--MDS for any $s\geq r$ (see \cite[Corollary 3.2]{tsfasman}), we conclude that
\begin{equation} \label{equa2}
d_{r+j}(C_X(d))=d_r(C_X(d))+j,
\end{equation}
for $j=0,\ldots,\beta-r$. If we take $i:=\beta-r-j$, we notice that $0\leq i \leq\beta-r=d_1(C_X(a_X-d))-2$. By using Equations (\ref{equa1}) and (\ref{equa2}) we get
$$
d_{\beta-i}(C_X(d))=d_r(C_X(d))+\beta-r-i=|X|-i,
$$
and the claim follows.
\end{proof}

\begin{corollary} \label{corollary1}
With the same notation of Theorem \ref{theorem1}. Let $\alpha=H_X(d)-d_1(C_X(a_X-d))+2$. Then $C_X(d)$ is an $r$--MDS code for $\alpha \leq r \leq |X|$. In fact
\begin{equation} \label{GHWCI}
d_{\alpha+j}(C_X(d))=|X|-d_1(C_X(a_X-d))+2+j,
\end{equation}
for all $j=0,\ldots,d_1(C_X(a_X-d))-2$.
\end{corollary}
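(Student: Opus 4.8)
The plan is to derive Corollary \ref{corollary1} directly from Theorem \ref{theorem1} by a simple change of indexing variable. Theorem \ref{theorem1} already establishes that $d_{\beta-i}(C_X(d))=|X|-i$ for $i=0,\ldots,d_1(C_X(a_X-d))-2$, where $\beta=H_X(d)$. The corollary introduces $\alpha=H_X(d)-d_1(C_X(a_X-d))+2=\beta-d_1(C_X(a_X-d))+2$, which is precisely the value of $r$ appearing inside the proof of Theorem \ref{theorem1}. Since $C_X(d)$ is shown there to be $r$-MDS for this $r$ (and hence $s$-MDS for all $s\geq r$ by \cite[Corollary 3.2]{tsfasman}), the range $\alpha\leq r\leq|X|$ claim is immediate once I verify that $\alpha$ coincides with the threshold index $r$ from the theorem.

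First I would record the identity $\alpha=\beta-d_1(C_X(a_X-d))+2$ and note that the largest index at which we have information is $\beta=H_X(d)$ itself, with $d_\beta(C_X(d))=|X|$. Then I would convert the backward-running index $i$ of Theorem \ref{theorem1} into the forward-running index $j$ of the corollary via the substitution $j=d_1(C_X(a_X-d))-2-i$, so that $i=0$ corresponds to $j=d_1(C_X(a_X-d))-2$ (the top index $\beta$) and $i=d_1(C_X(a_X-d))-2$ corresponds to $j=0$ (the index $\alpha$). Under this substitution, the indices run as $\beta-i=\alpha+j$, and the right-hand side $|X|-i$ becomes $|X|-d_1(C_X(a_X-d))+2+j$. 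This yields exactly Equation (\ref{GHWCI}).

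The only genuine verification, and the step I expect to require the most care, is confirming that the two indexings describe the same set of integers and that the endpoints match: I must check that $\alpha+j$ ranges over exactly the same values as $\beta-i$ as $i$ and $j$ each run over $\{0,\ldots,d_1(C_X(a_X-d))-2\}$, and that the arithmetic $|X|-i=|X|-d_1(C_X(a_X-d))+2+j$ is consistent with $j=d_1(C_X(a_X-d))-2-i$. This is a routine but error-prone bookkeeping of three quantities ($\alpha$, $\beta$, and $r$) that all coincide, so I would state the substitution explicitly and let the conclusion follow. The $r$-MDS assertion for the full range $\alpha\leq r\leq|X|$ then follows because MDS-ness at $\alpha$ propagates upward by \cite[Corollary 3.2]{tsfasman}, exactly as invoked in the proof of Theorem \ref{theorem1}.
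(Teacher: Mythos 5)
Your proposal is correct and is essentially the paper's own argument: the paper proves the corollary by citing Equations (\ref{equa1}) and (\ref{equa2}) from the proof of Theorem \ref{theorem1}, and your substitution $i=d_1(C_X(a_X-d))-2-j$ in the statement of that theorem is just the inverse of the reindexing the paper used to pass from those equations to the theorem's conclusion. In particular, your identification of $\alpha$ with the threshold $r=\beta+2-d_1(C_X(a_X-d))$ and the upward propagation of the MDS property via \cite[Corollary 3.2]{tsfasman} are exactly the ingredients the paper relies on.
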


\begin{proof}
The claim follows immediately from Equations (\ref{equa1}) and (\ref{equa2}) in the proof of Theorem \ref{theorem1}.
\end{proof}

\begin{example} \label{example2}
We use the notation given in \cite[Example 4.4]{hiram}. Let
$K=\mathbb{F}_{181}$ be a finite field with $181$ elements. Let $X$
be the following projective degenerate torus, which is a complete
intersection (see \cite[Theorem 1]{sarabia5}):    
$$
X=\left\{\left[1:t_1^{90}:t_2^{36}:t_3^{20} \right] \in \mathbb{P}^{3} : t_1,t_2,t_3 \in K^*\right\}.
$$
\quad Let $\beta$ be a generator of the cyclic group $K^*$ and let
$H_1,H_2,H_3$ be the cyclic groups generated by $\beta^{90}$,
$\beta^{36}$, $\beta^{20}$, respectively. 
Note that $X$ is the image of $H_1\times H_2\times H_3$ under
the map $x\mapsto [1:x]$. 

Thus $|X|=|H_1||H_2||H_3|$, and consequently
$$|X|=\frac{180}{\gcd(180,90)} \times \frac{180}{\gcd(180,36)} \times
\frac{180}{\gcd(180,20)}=2 \times 5 \times 9=90.$$  
\quad In this case the $a$-invariant of $S/I_X$ is $a_X=12$ 
and the regularity index of $H_X$ is $13$. 
If  $d=5$, $H_X(5)=35$ and $d_1(C_X(7))=7$, we 
get the last six generalized Hamming weights of $C_X(5)$:
$$
d_{35-i}(C_X(5))=90-i \,\,\, {\mbox{for all}} \,\,\, i=0,\ldots,5,
$$
and $C_{X}(5)$ is $r$--MDS for $30 \leq r \leq 35$.
In a similar way, if $d=7$, $H_X(C_X(7))=55$ and $d_1(C_X(5))=9$, we obtain the last eight generalized Hamming weights of $C_X(7)$:
$$
d_{55-i}(C_X(7))=90-i \,\,\, {\mbox{for all}} \,\,\, i=0,\ldots,7,
$$
and $C_{X}(7)$ is $r$--MDS for $48 \leq r \leq 55$.

\end{example}

\section{Codes parameterized by a set of monomials} \label{monomials}
Let $X\subset {\mathbb{T}_{s-1}}$ be a toric set parameterized by a
set of monomials and let $C_X(d)$ be its associated Reed--Muller-type
code. The following
result generalizes \cite[Proposition 5.2]{algcodes} and shows that
the generalized Hamming weights have the opposite behavior to the Hilbert function. 

\begin{theorem} \label{theorem2}
Let $C_X(d)$ be a parameterized code and let $1\leq r \leq H_X(d)$.
Then $C_X(d)$ is non-degenerate, and if $d_r(C_X(d))=r$, then
$d_r(C_X(d+1))=r$. Moreover, if $d_r(C_X(d))>r$, then
$d_r(C_X(d)) > d_r(C_X(d+1))$.   
\end{theorem}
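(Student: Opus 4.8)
The plan is to reduce the three assertions to a single structural fact, namely that for a toric set the codes of consecutive degrees are nested: $C_X(d) \subseteq C_X(d+1)$. Since every point of $X$ lies in $\mathbb{T}_{s-1}$, all of its coordinates are nonzero, so by Lemma~\ref{may21-15-1} we may normalize $f_i = X_1^{d}$ in degree $d$ and $f_i = X_1^{d+1}$ in degree $d+1$. A direct coordinatewise check then gives
$$
{\rm ev}_{d+1}(X_1 f) = {\rm ev}_d(f) \quad \text{for every } f \in S_d,
$$
because the extra factor $(P_i)_1$ appearing in numerator and denominator cancels. Non-degeneracy is immediate: with this normalization ${\rm ev}_d(X_1^{d})$ is the all-ones vector, so no projection $\pi_i$ is identically zero. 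The displayed identity shows that each codeword of $C_X(d)$ is a codeword of $C_X(d+1)$, hence $C_X(d) \subseteq C_X(d+1)$ and $H_X(d) \le H_X(d+1)$.

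Next I would record the monotonicity of the weights. As $C_X(d)\subseteq C_X(d+1)$, every $r$-dimensional subcode of $C_X(d)$ is also an $r$-dimensional subcode of $C_X(d+1)$, so passing to the minimum over a larger family can only decrease it: $d_r(C_X(d+1)) \le d_r(C_X(d))$ for all $1 \le r \le H_X(d)$. Combining this with the universal bound $d_r \ge r$, valid for any linear code because its weights satisfy $1 \le d_1 < d_2 < \cdots$, settles the equality case: if $d_r(C_X(d)) = r$ then $r \le d_r(C_X(d+1)) \le r$, so $d_r(C_X(d+1)) = r$.

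For the strict inequality, assume $d_r(C_X(d)) = w > r$ and fix an optimal subcode $\mathcal{D} \subseteq C_X(d)$ of dimension $r$ with support $T$, $|T| = w$. For a linear form $\ell$ put $v_\ell = {\rm ev}_1(\ell)$; the same cancellation yields ${\rm ev}_{d+1}(\ell f) = v_\ell \cdot {\rm ev}_d(f)$, where $\cdot$ denotes the coordinatewise product, so $v_\ell \cdot \mathcal{D}$ is a subcode of $C_X(d+1)$ with support $\{ i \in T : \ell(P_i) \ne 0 \}$, and the map $c \mapsto v_\ell \cdot c$ is injective on $\mathcal{D}$ exactly when no nonzero codeword of $\mathcal{D}$ is supported inside the zero set of $\ell$. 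Since at most $r$ coordinates of $T$ can carry a weight-one codeword of $\mathcal{D}$ (such codewords are linearly independent) and $w > r$, some $i_0 \in T$ carries none; it then suffices to produce $\ell$ vanishing at $P_{i_0}$ but at no other point of $T$, for then $v_\ell \cdot \mathcal{D}$ has dimension $r$ and support $T \setminus \{i_0\}$, giving $d_r(C_X(d+1)) \le w-1 < w$.

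The main obstacle is this last choice of $\ell$ over the finite field $K$. When $r = 1$ only two points must be separated by a hyperplane, which is always possible, and this already recovers the minimum-distance statement of \cite[Proposition~5.2]{algcodes}; for $r \ge 2$ one must separate $P_{i_0}$ from every other point of $T$, and a union bound over hyperplanes succeeds only while $|T|$ is small. The way around this is to relax the target: it is enough to find a hyperplane $H$ with $1 \le |H \cap T| \le M-1$, where $M$ is the minimum weight of $\mathcal{D}$, since any intersection of size below $M$ automatically avoids the support of every nonzero codeword and is therefore removable without lowering the dimension. Proving that such a hyperplane exists in the large-support regime is the delicate point, and it is here that the toric structure of $X$, which tightly controls the possible hyperplane sections of subsets of $\mathbb{T}_{s-1}$, must be brought to bear together with the hypothesis $\dim \mathcal{D} < |T|$.
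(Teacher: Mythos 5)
Your reduction of the theorem to the nesting $C_X(d)\subseteq C_X(d+1)$ (via the normalization $f_i=X_1^d$ and multiplication by $X_1$), the non-degeneracy via the all-ones word ${\rm ev}_d(X_1^d)$, the monotonicity $d_r(C_X(d+1))\le d_r(C_X(d))$, and the equality case via the universal bound $d_r\ge r$ are all correct and coincide exactly with the paper's proof of Theorem \ref{theorem2}. The gap is in the strict-inequality case, which is the real content of the theorem: you reduce it to producing a hyperplane $H$ with $1\le |H\cap T|\le M-1$, where $T$ is the support of an optimal $r$-dimensional subcode $\mathcal{D}$ and $M$ is its minimum weight, and you then state explicitly that the existence of such an $H$ in the large-support regime is ``the delicate point'' that you do not prove. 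As written, your argument establishes the clause $d_r(C_X(d))>r \Rightarrow d_r(C_X(d))>d_r(C_X(d+1))$ only when $r=1$, or when $|T|$ is small enough for the union bound, or when $M$ exceeds every hyperplane section of $T$; in general the proof is unfinished, and this is a genuine gap rather than a detail.

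For comparison, the paper handles this step much more cheaply: it takes any two indices $i\neq j$ in $T$, writes $P_i=[1:a_{2i}:\cdots:a_{si}]$ and $P_j=[1:b_{2j}:\cdots:b_{sj}]$ (possible because $X$ lies in the torus), picks $k$ with $a_{ki}\neq b_{kj}$, and multiplies a basis $\{\Lambda_{f_1},\ldots,\Lambda_{f_r}\}$ of $\mathcal{D}$ by the single form $\ell=a_{ki}X_1-X_k$. Since $\ell(P_i)=0$ and $\ell(P_j)\neq 0$, the index $i$ leaves the support while $j$ stays, so the support strictly drops; the paper then asserts that the products $\Lambda_{\ell f_1},\ldots,\Lambda_{\ell f_r}$ remain linearly independent ``because $\mathcal{B}$ is,'' which yields an $r$-dimensional subcode of $C_X(d+1)$ of support size $<|T|$. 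The paper never asks $Z(\ell)$ to avoid the rest of $T$, which is why it never meets your counting problem; but the independence it asserts is precisely the injectivity issue you isolated (the kernel of $c\mapsto v_\ell\cdot c$ on $\mathcal{D}$ is the space of codewords supported inside $Z(\ell)\cap T$, and the paper offers no argument that it vanishes). So your analysis is in fact more careful than the published proof at the decisive point; the trouble is that by strengthening the requirement on $\ell$ you traded the paper's one-line assertion for an existence statement you could not establish. A repair closer to the paper's bookkeeping would be to split $\mathcal{D}=\mathcal{D}_0\oplus\mathcal{D}_1$, where $\mathcal{D}_0$ consists of the codewords supported in $Z(\ell)$, and observe that $(v_\ell\cdot\mathcal{D}_1)\oplus\mathcal{D}_0\subseteq C_X(d+1)$ always has dimension $r$ because the two summands have disjoint supports; one is then left to choose $\ell$ so that some point of $Z(\ell)\cap T$ lies outside ${\rm supp}(\mathcal{D}_0)$ --- still a choice that requires justification, and the step your proposal (and, arguably, the paper itself) leaves open.
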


\begin{proof}
By taking $f=X_1^d \in S_d$, we conclude that $(1,1,\ldots,1) \in C_X(d)$, and then this code is non-degenerate. Let $X=\{P_1,\ldots,P_{|X|}\}$ and $\Lambda_f \in C_X(d)$ with
\begin{equation} \label{Lambda}
\Lambda_f:=\left( \frac{f(P_1)}{X_1^d(P_1)}, \ldots, \frac{f(P_{|X|})}{X_1^d(P_{|X|})}\right),
\end{equation}
where $f \in S_d$. Thus
$$
\Lambda_f=\left( \frac{(X_1 f)(P_1)}{X_1^{d+1}(P_1)}, \ldots, \frac{(X_1 f)(P_{|X|})}{X_1^{d+1}(P_{|X|})}\right),
$$
and then $\Lambda_f \in C_X(d+1)$. This shows that $C_X(d) \subset C_X(d+1)$ for all $d$. Therefore
\begin{equation} \label{eqa}
d_r(C_X(d+1)) \leq d_r(C_X(d)),
\end{equation}
for all $r=1,\ldots,H_X(d)$. If $d_r(C_X(d))=r$ then, by inequality (\ref{eqa}), we get
$$
r \leq d_r(C_X(d+1)) \leq d_r(C_X(d))=r,
$$
and the claim follows in this case.
Now, we consider $d_r(C_X(d)) >r$. Let $\mathcal{D}$ be a subspace of $C_X(d)$ with dimension $r$ and such that $d_r(C_X(d)=|\,{\mbox{supp}}\, (\mathcal{D})|$. Let $\mathcal{B}=\{\Lambda_{f_1},\ldots, \Lambda_{f_r}\}$ be a basis of $\mathcal{D}$ where we use the notation of the Equation (\ref{Lambda}) to $\Lambda_{f_i}$. Therefore
$$
r<d_r(C_X(d))=|\,{\mbox{supp}}\, (\mathcal{D})|=|\,{\mbox{supp}}\, (\mathcal{B})|.
$$

Let $i,j \in {\mbox{supp}} \, (\mathcal{B})$ with $i \neq j$. Then there exists $f_{i_1}, f_{i_2} \in S_d$ (not necessarily different) such that $f_{i_1}(P_i) \neq 0$ and $f_{i_2}(P_j) \neq 0$, where $i_1,i_2 \in \{1,\ldots,r\}$. As $P_i \neq P_j$ we can take, without loss of generality, 
$$P_i=[(1,a_{2i},\ldots,a_{si})], \,\,\,\, P_j=[(1,b_{2j},\ldots,b_{sj})],$$ 
and $a_{ki} \neq b_{kj}$ for some $k \in \{2,\ldots,s\}$. We define the following homogeneous polynomials $g_m \in S_{d+1}$:
$$
g_m:=(a_{ki}X_1-X_k) f_m,
$$
for all $m=1.\ldots,r$. Notice that $g_m(P_i)=0$ for all $m=1,\ldots,r$, but $g_{i_2}(P_j) \neq 0$. Furthermore, let $\mathcal{B}'=\{\Lambda_{g_1},\ldots,\Lambda_{g_r}\}$. $\mathcal{B}'$ is a linearly independent set (because $\mathcal{B}$ is also a linearly independent set) and if $\mathcal{D}'$ is the subspace of $C_X(d+1)$ generated by $\mathcal{B}'$, we conclude that (because $i \in {\mbox{supp}} \, (\mathcal{B}) \setminus {\mbox{supp}} \, (\mathcal{B}')$)
$$
d_r(C_X(d+1)) \leq |\, {\mbox{supp}} \, (\mathcal{D}')|=|\, {\mbox{supp}} \, (\mathcal{B}')| 
<|\, {\mbox{supp}} \, (\mathcal{B})|=d_r(C_X(d)),
$$
and the claim follows.
\end{proof}

\begin{remark}
The behavior of the generalized Hamming weights given in Theo\-rem
\ref{theorem2} is illustrated in the Tables \ref{GHWT2} and \ref{GHWT3} of Example \ref{example}. 
\end{remark}

\section{Codes parameterized by the projective torus} \label{finalsection}
If $X$ is the projective torus $\mathbb{T}_{s-1}$, this is a complete
intersection, the length, dimension, and minimum distance of $C_X(d)$
are known, and so is the $a$--invariant of $S/I_X$ 
(see \cite{duursma}, \cite{sarabia0}, and \cite{ci-codes}).
Furthermore, when $s=2$ the code $C_{\mathbb{T}_1}$ is MDS and its
complete weight hierarchy is given in \cite{GHW2014}. Actually,
Theorem \ref{theorem1} can be used to find some generalized Hamming
weights for the codes $C_{\mathbb{T}_{s-1}}(d)$, as Example
\ref{example} shows. In order to prove the theorem that gives the
second generalized Hamming weight of $C_{\mathbb{T}_{s-1}}(d)$ we use
the following notation: If $f \in S_d$ then        
\begin{equation} \label{ZeroesInitial}
Z_{\mathbb{T}_{s-1}}(f):=\{[P] \in \mathbb{T}_{s-1}: f(P)=0\}.
\end{equation}

Also, from now on we use $\beta$ as a generator of the cyclic group $(K^*,\cdot)$. Notice that if $q=2$ then $|\mathbb{T}_{s-1}|=1$ and $C_{\mathbb{T}_{s-1}}(d)=K$. Thus in this section we assume $q \geq 3$.

\begin{lemma} \label{lemma2}
Let $s=3$ and let $d \in \mathbb{N}$, $1 \leq d \leq q-2$. Then we can find $f_1, f_2 \in S_d$ such that
$$
|Z_{\mathbb{T}_2}(f_1) \cap Z_{\mathbb{T}_2}(f_2)|=1+(d-1)(q-1).
$$
\end{lemma}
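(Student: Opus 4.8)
The plan is to exhibit $f_1,f_2$ explicitly as products of linear forms that carve out configurations of "coordinate lines" in the torus, arranged so that $f_1$ and $f_2$ share exactly $d-1$ of these lines and meet transversally in one extra point. Throughout I would normalize a point of $\mathbb{T}_2$ as $[1:a:b]$ with $a,b\in K^*$, so that $|\mathbb{T}_2|=(q-1)^2$, and I would record the two basic facts that the linear form $X_2-\beta^j X_1$ vanishes on the $q-1$ torus points $\{[1:\beta^j:b]:b\in K^*\}$, while $X_3-\beta X_1$ vanishes on $\{[1:a:\beta]:a\in K^*\}$.

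First I would set $g:=\prod_{j=1}^{d-1}(X_2-\beta^j X_1)\in S_{d-1}$ (the empty product, equal to $1$, when $d=1$), together with the linear forms $\ell_1:=X_2-X_1$ and $\ell_2:=X_3-\beta X_1$, and define $f_1:=g\,\ell_1$ and $f_2:=g\,\ell_2$, both visibly homogeneous of degree $d$. The set $Z_{\mathbb{T}_2}(g)$ is the disjoint union of the $d-1$ lines $a=\beta^j$, $j=1,\dots,d-1$; these values are pairwise distinct because $\beta$ has order $q-1$ and $d-1\le q-3<q-1$, so that $|Z_{\mathbb{T}_2}(g)|=(d-1)(q-1)$.

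Next I would compute the intersection using $Z_{\mathbb{T}_2}(gh)=Z_{\mathbb{T}_2}(g)\cup Z_{\mathbb{T}_2}(h)$ and distributing the intersection over the two unions. Since the cross terms $Z_{\mathbb{T}_2}(g)\cap Z_{\mathbb{T}_2}(\ell_2)$ and $Z_{\mathbb{T}_2}(\ell_1)\cap Z_{\mathbb{T}_2}(g)$ are both contained in $Z_{\mathbb{T}_2}(g)$, they are absorbed, leaving
$$
Z_{\mathbb{T}_2}(f_1)\cap Z_{\mathbb{T}_2}(f_2)=Z_{\mathbb{T}_2}(g)\cup\bigl(Z_{\mathbb{T}_2}(\ell_1)\cap Z_{\mathbb{T}_2}(\ell_2)\bigr).
$$
The transversal term $Z_{\mathbb{T}_2}(\ell_1)\cap Z_{\mathbb{T}_2}(\ell_2)$ consists of the points with $a=1$ and $b=\beta$, i.e. the single point $[1:1:\beta]$.

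The one point that needs care is verifying that this extra point is genuinely new, namely $[1:1:\beta]\notin Z_{\mathbb{T}_2}(g)$; this is exactly where the hypothesis $d\le q-2$ enters, since it forces $1=\beta^0\notin\{\beta^1,\dots,\beta^{d-1}\}$. Granting this, the union above is disjoint and $|Z_{\mathbb{T}_2}(f_1)\cap Z_{\mathbb{T}_2}(f_2)|=(d-1)(q-1)+1$, as claimed. I expect the only real subtlety to be the bookkeeping on the exponents of $\beta$ that guarantees the $d-1$ shared lines are distinct and that the transversal point avoids all of them; the identity for zero sets of products and the count of each coordinate line are routine.
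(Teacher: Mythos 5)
Your proof is correct and follows essentially the same strategy as the paper's: both exhibit $f_1,f_2$ as products of a common degree-$(d-1)$ factor cutting out $d-1$ disjoint lines of the torus with two extra linear forms whose zero sets meet in exactly one point off those shared lines, the hypothesis $d\le q-2$ guaranteeing disjointness. The differences are cosmetic only --- the paper puts the shared lines in the $X_3$-coordinate (factors $\beta^jX_1-X_3$) and gets its transversal point from a line and the diagonal $X_2-X_3$, whereas you use $X_2-\beta^jX_1$ and two coordinate lines, and you absorb $d=1$ via the empty product instead of treating it as a separate case as the paper does.
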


\begin{proof}
We notice that
$
\mathbb{T}_2=\{[1:t_1:t_2] \in \mathbb{P}^2:t_1,t_2 \in K^*\}
$.
If $d=1$ then we take $f_1=X_1-X_3$ and $f_2=X_2-X_3$. Therefore
$$
Z_{\mathbb{T}_2}(f_1)=\{[1:\alpha:1] \in \mathbb{P}^2: \alpha \in K^*\}, \,\,
Z_{\mathbb{T}_2}(f_2)=\{[1:\alpha:\alpha] \in \mathbb{P}^2: \alpha \in K^*\}.
$$

Thus
$
|Z_{\mathbb{T}_2}(f_1) \cap Z_{\mathbb{T}_2}(f_2)|=1
$,
and the claim follows for $d=1$. Now let $2 \leq d \leq q-2$. Furthermore, let
$
f_1=(\beta^{q-2}X_1-X_3) \prod_{j=1}^{d-1}(\beta^jX_1-X_3) 
$
and
$
f_2=(X_2-X_3) \prod_{j=1}^{d-1}(\beta^jX_1-X_3)
$.
Notice that $f_1, f_2 \in S_d$ and if $[P]=[1:t_1:t_2] \in \mathbb{T}_2$ then
$
f_1(P)=(\beta^{q-2}-t_2) \prod_{j=1}^{d-1}(\beta^j-t_2)
$
and
$
f_2(P)=(t_1-t_2) \prod_{j=1}^{d-1}(\beta^j-t_2)
$.
Moreover if $B_j:=\{[1:\alpha:\beta^j] : \alpha \in K^*\}$ for $1 \leq j \leq d-1$ then 
$$
Z_{\mathbb{T}_{2}}(f_1)=\{[1:\alpha:\beta^{q-2}]: \alpha \in K^*\} \cup (\cup_{j=1}^{d-1} B_j),
$$
and
$$
Z_{\mathbb{T}_{2}}(f_2)=\{[1:\alpha:\alpha]: \alpha \in K^*\} \cup (\cup_{j=1}^{d-1} B_j).
$$

As $B_{j_1} \cap B_{j_2}= \emptyset$ if $j_1 \neq j_2$,
$
\{[1:\alpha:\beta^{q-2}]: \alpha \in K^*\} \cap \{[1:\alpha:\alpha]: \alpha \in K^*\}=\{[1:\beta^{q-2}:\beta^{q-2}]\},
$
and $[1:\beta^{q-2}:\beta^{q-2}] \notin \cup_{j=1}^{d-1} B_j$, we conclude that
$$
|Z_{\mathbb{T}_2}(f_1) \cap Z_{\mathbb{T}_2}(f_2)|=1+(d-1)(q-1),
$$
and the claim follows.
\end{proof}

\begin{theorem} \label{theoremzeroes}
Let $s \geq 3$, $\eta=(q-2)(s-2)$ and $r=(q-2)(s-1)$. Then we can find $F, G \in S_d$ such that
\begin{align*}
& |Z_{\mathbb{T}_{s-1}}(F) \cap Z_{\mathbb{T}_{s-1}}(G)|    
= \\
& \left\{
\begin{array}{ll}
(q-1)^{s-(k+3)}[(q-1)^{k+2}-(q-1)(q-l)+1]  & 1 \leq d \leq \eta \\
(q-1)^{s-1}-q+l & \eta < d \leq r, 
\end{array} \right.
\end{align*}
where $k$ and $l$ are the unique integers such that $d=k(q-2)+l$, $k \geq 0$  and $1 \leq l \leq q-2$.
\end{theorem}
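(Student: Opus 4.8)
The plan is to build $F$ and $G$ as products of a large common factor $H$, whose zero locus already accounts for most of the torus, times small correction factors tailored, in the spirit of Lemma \ref{lemma2}, to produce exactly the right number of additional common zeros. Throughout I normalize $\mathbb{T}_{s-1}=\{[1:t_2:\cdots:t_s]:t_i\in K^*\}$ and use the elementary fact that $\beta^jX_1-X_i$ vanishes on $\mathbb{T}_{s-1}$ precisely on the slice where the $i$-th coordinate equals $\beta^j$. Writing $d=k(q-2)+l$, the natural guess is to spend the $k(q-2)$ part on $k$ \emph{full blocks} $h_i:=\prod_{j=1}^{q-2}(\beta^jX_1-X_i)$, one for each of the coordinates $X_2,\ldots,X_{k+1}$ (each $h_i$ kills every value of the $i$-th coordinate except $1$), and to spend the remaining degree $l$ on a copy of the Lemma \ref{lemma2} construction.

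Concretely, set $H:=h_2\cdots h_{k+1}$, of degree $k(q-2)$. Since $H$ vanishes wherever some coordinate among $t_2,\ldots,t_{k+1}$ differs from $1$, its zero locus is everything except the slice $\Sigma:=\{t_2=\cdots=t_{k+1}=1\}$, so (writing $Z(\,\cdot\,)$ for $Z_{\mathbb{T}_{s-1}}(\,\cdot\,)$) one has $|Z(H)|=(q-1)^{s-1}-(q-1)^{s-k-1}$. Taking $F=H\cdot F'$ and $G=H\cdot G'$ with $\deg F'=\deg G'=l$, the factorization identity
\begin{equation*}
Z(F)\cap Z(G)=Z(H)\cup\big[(Z(F')\cap Z(G'))\setminus Z(H)\big]
\end{equation*}
reduces the problem to counting the common zeros of $F'$ and $G'$ on $\Sigma$, on which $F'$ and $G'$ restrict to functions of the still-free coordinates $t_{k+2},\ldots,t_s$.

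In the first range $1\leq d\leq\eta$ one checks $k\leq s-3$, so $\Sigma$ still carries at least two free coordinates; I then replay Lemma \ref{lemma2} on $\Sigma$ with common factor $h:=\prod_{j=1}^{l-1}(\beta^jX_1-X_s)$ and corrections $F'=(\beta^{q-2}X_1-X_s)\,h$, $G'=(X_{k+2}-X_s)\,h$. Exactly as in that lemma, $Z(h)$ contributes $(l-1)(q-1)^{s-k-2}$ slice points and the intersection of the two linear corrections (forcing $t_s=t_{k+2}=\beta^{q-2}$) contributes a further $(q-1)^{s-k-3}$; the condition $l\leq q-2$ guarantees $\beta^{q-2}\notin\{\beta^1,\ldots,\beta^{l-1}\}$, so these contributions are disjoint from each other and from $Z(H)$. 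Adding $|Z(H)|$ and simplifying gives precisely $(q-1)^{s-1}-(q-1)^{s-k-2}(q-l)+(q-1)^{s-k-3}$. In the second range $\eta<d\leq r$ one has $k=s-2$, so $\Sigma$ is the one-dimensional torus in the single coordinate $t_s$; here I only need $F'$ and $G'$ of degree $l$ sharing exactly $l-1$ values of $t_s$, which I obtain from the same $h$ together with two degree-one corrections vanishing at distinct values of $t_s$ on $\Sigma$ (for instance $\beta^{q-2}X_1-X_s$ for $F'$ and $X_1-X_s$, which vanishes at $t_s=1$, for $G'$). This yields $l-1$ extra points and the total $(q-1)^{s-1}-q+l$.

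I expect the main obstacle to be bookkeeping rather than a single hard idea: one must check that the correction factors behave as intended after restriction to $\Sigma$, that the additional common zeros genuinely avoid $Z(H)$ (this is exactly where $l\leq q-2$ and the choice of the value $\beta^{q-2}$ enter), and, above all, that the coordinates used for the blocks and for the two corrections are all distinct and available. This last point is precisely what forces the case split at $k=s-2$, i.e.\ at $d=\eta$, since there $\Sigma$ no longer has a spare coordinate to play the role of $X_{k+2}$ in Lemma \ref{lemma2}. The final algebraic collapse of the two-term count into the stated closed form is routine.
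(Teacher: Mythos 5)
Your construction is correct and is essentially the paper's own proof: the same full blocks $\prod_{i=1}^{q-2}(\beta^iX_1-X_j)$ on coordinates $X_2,\ldots,X_{k+1}$, the same Lemma \ref{lemma2}-style correction factors of degree $l$ on the remaining slice, and the same case split at $k=s-2$ where the slice runs out of a spare coordinate. The only differences are cosmetic: you relabel which coordinates carry the slice variable and the linear correction, and you treat $k=0$ uniformly with $1\leq k\leq s-3$ instead of splitting it off as a separate case as the paper does.
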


\begin{proof}
{\bf{Case I:}} Let $1 \leq d \leq q-2$. Let $F(X_1,\ldots,X_s):=f_1(X_1,X_2,X_3)$ and $G(X_1,\ldots,X_s):=f_2(X_1,X_2,X_3)$, where $f_1$ and $f_2$ are the polynomials given in the Lemma \ref{lemma2}. It is easy to see that $[t_1:t_2:t_3] \in Z_{\mathbb{T}_2}(f_1) \cap Z_{\mathbb{T}_2}(f_2)$ if and only if $[t_1:t_2:t_3:t_4:\cdots:t_s] \in Z_{\mathbb{T}_{s-1}}(F) \cap Z_{\mathbb{T}_{s-1}}(G)$ for any $t_4, \ldots, t_s \in K^*$. Therefore
$$
|Z_{\mathbb{T}_{s-1}}(F) \cap Z_{\mathbb{T}_{s-1}}(G)| =(q-1)^{s-3}[1+(d-1)(q-1)],
$$
and due to the fact that in this case $l=d$ and $k=0$, the claim follows.

{\bf{Case II:}} Let $q-2<d \leq (q-2)(s-2)$. Notice that in this case
$1 \leq k \leq s-3$. We define the following polynomials:
$H_k:=\prod_{j=1}^{k}[\prod_{i=1}^{q-2}(\beta^iX_1-X_{j+1})]$,
$f_{k,l}:=(\beta^{q-2}X_1-X_{k+2})
\prod_{i=1}^{l-1}(\beta^iX_1-X_{k+2})$, $g_{k,l}:=(X_{k+2}-X_{k+3})
\prod_{i=1}^{l-1}(\beta^iX_1-X_{k+2})$. Moreover, we take $F:=H_k
\cdot f_{k,l}$ and $G:=H_k \cdot g_{k,l}$. We notice that $F, G \in
S_d$, where $d=k(q-2)+l$. Obviously $Z_{\mathbb{T}_{s-1}}(H_k)
\subset Z_{\mathbb{T}_{s-1}}(F) \cap Z_{\mathbb{T}_{s-1}}(G)$. Let
$[P]=[1:\alpha_2,\ldots,\alpha_s] \in \mathbb{T}_{s-1}$. If there
exists $i \in \{2,\ldots,k+1\}$ such that $\alpha_i \neq 1$, we can
write $\alpha_i=\beta^r$ with $r \leq q-2$, then
$(\beta^rX_1-X_i)(P)=0$, but $\beta^rX_1-X_i$ is a factor of $H_k$,
thus $[P] \in Z_{\mathbb{T}_{s-1}}(F) \cap Z_{\mathbb{T}_{s-1}}(G)$.
Actually if $$A=\{[1:1: \cdots:1:\alpha_{k+2}: \cdots:\alpha_s]:
\alpha_i \in K^*\}$$ then $\mathbb{T}_{s-1} \setminus A \subset
Z_{\mathbb{T}_{s-1}}(F) \cap Z_{\mathbb{T}_{s-1}}(G)$ and
$|\mathbb{T}_{s-1} \setminus A|=(q-1)^{s-1}-(q-1)^{s-(k+1)}$. Now we
need to find out the number of zeroes of $F$ and $G$ that are in $A$.
If $[P] \in A$ then $H_k(P) \neq 0$; thus we need to examine just the
zeroes of $f_{k,l}$ and $g_{k,l}$ that are in A. But $f_{k,l}$ and
$g_{k,l}$ are of degree $l$, and if we use the proof of Lemma
\ref{lemma2} and the Case I above (we consider the entries
$1,k+2,k+3,\ldots,s$ of the points of $A$, that is $s-k$ entries) we
conclude that                   
$$
|Z_{\mathbb{T}_{s-1}}(f_{k,l}) \cap Z_{\mathbb{T}_{s-1}}(g_{k,l}) \cap A| 
=(q-1)^{s-(k+3)}[(q-1)(l-1)+1]. 
$$

Therefore
\begin{align*}
|Z_{\mathbb{T}_{s-1}}(F) \cap Z_{\mathbb{T}_{s-1}}(G)| & =(q-1)^{s-1}-(q-1)^{s-(k+1)} \\
& +(q-1)^{s-(k+3)}[(q-1)(l-1)+1] \\
& =(q-1)^{s-(k+3)}[(q-1)^{k+2} \\
& -(q-1)^2 +(q-1)(l-1)+1] \\
& =(q-1)^{s-(k+3)}[(q-1)^{k+2}-(q-1)(q-l)+1], 
\end{align*}
and the claim follows.

{\bf{Case III:}} Let $(q-2)(s-2) < d \leq (q-2)(s-1)$. In this case $k=s-2$. We use the polynomials $f_{s-2,l}=(\beta^{q-2}X_1-X_s) \prod_{i=1}^{l-1} (\beta^iX_1-X_s)$ and $g_{s-2,l}=(X_1-X_s) \prod_{i=1}^{l-1} (\beta^iX_1-X_s)$. We continue using the notation introduced above for the remaining polynomials. It is immediate that $Z_{\mathbb{T}_{s-1}}(H_{s-2})=\mathbb{T}_{s-1} \setminus B$, where $B=\{[1:1:\cdots:1:\alpha_s]: \alpha_s \in K^*\}$. Therefore
$$
|Z_{\mathbb{T}_{s-1}}(H_{s-2})|=(q-1)^{s-1}-(q-1).
$$

Moreover
$$
Z_{\mathbb{T}_{s-1}}(f_{s-2,l}) \cap Z_{\mathbb{T}_{s-1}}(g_{s-2,l}) \cap B 
= \{[1:1: \cdots :1:\beta^i] : i=1,\ldots, l-1\}.
$$

Thus $| Z_{\mathbb{T}_{s-1}}(f_{s-2,l}) \cap Z_{\mathbb{T}_{s-1}}(g_{s-2,l}) \cap B |=l-1$. We conclude that
$$
|Z_{\mathbb{T}_{s-1}}(F) \cap Z_{\mathbb{T}_{s-1}} (G)| = 
(q-1)^{s-1}-(q-1)+l-1 
= (q-1)^{s-1}-q+l,
$$
and the claim follows.
\end{proof}

\begin{remark}\label{dec21-17}
The formula for the minimum distance of the codes $C_{\mathbb{T}_{s-1}}(d)$ was found in \cite[Theorem 3.5]{ci-codes}:
\begin{equation} \label{eqmd}
d_1(C_{\mathbb{T}_{s-1}}(d))= \left\{
\begin{array}{lll}
(q-1)^{s-(k+2)}(q-1-l) & {\mbox{if}} & 1 \leq d <r \\
1 & {\mbox{if}} & d\geq r, 
\end{array} \right.
\end{equation}
where $k$ and $l$ are the unique integers such that $d=k(q-2)+l$, $k \geq 0$, $1 \leq l \leq q-2$, and $r=(q-2)(s-1)$. It is easy to see that Equation (\ref{eqmd}) can be reduced to
\begin{equation} \label{eqmd2}
d_1(C_{\mathbb{T}_{s-1}}(d))=\left\lceil (q-1)^{s-(k+2)}(q-1-l)\right\rceil ,
\end{equation}
for all $d \geq 1$.
\end{remark}

From now on we use the following notation:
$$
\mathcal{Z}_1(s,d):=(q-1)^{s-1}-(q-1)^{s-(k+2)}(q-1-l),
$$
where $k$ and $l$ are the unique integers such that $d=k(q-2)+l$, $k \geq 0$  and $1 \leq l \leq q-2$.
It is immediate that if $f \in S_d \setminus I_{\mathbb{T}_{s-1}}(d)$ then
$$
|Z_{\mathbb{T}_{s-1}}(f)| \leq \mathcal{Z}_1(s,d).
$$

Also from now on we use $\mathcal{Z}_2(s,d)$ as the right hand side of the equation involved in Theorem \ref{theoremzeroes}, that is
$$
\mathcal{Z}_2(s,d):=
\left\{
\begin{array}{ll}
(q-1)^{s-(k+3)}[(q-1)^{k+2}-(q-1)(q-l)+1]  & 1 \leq d \leq \eta \\
(q-1)^{s-1}-q+l & \eta < d \leq r, 
\end{array} \right.
$$
where $k$ and $l$ are the unique integers such that $d=k(q-2)+l$, $k \geq 0$  and $1 \leq l \leq q-2$.

\begin{lemma} \label{ineq2}
With the notation introduced above. Let $s, d \in \mathbb{N}$, $s \geq 3$.
\begin{enumerate}
\item If $1 \leq d'<d \leq (q-2)(s-1)$ then
\begin{enumerate}
\item $\mathcal{Z}_1(s,d') \leq \mathcal{Z}_1(s,d)$.
\item $\mathcal{Z}_2(s,d') \leq \mathcal{Z}_2(s,d)$.
\item $\mathcal{Z}_2(s,d') \leq \mathcal{Z}_1(s,d') \leq \mathcal{Z}_2(s,d) \leq \mathcal{Z}_1(s,d)$.
\end{enumerate}

\bigskip
\item \begin{enumerate}
\item $(q-1) \mathcal{Z}_1(s,d) = \mathcal{Z}_1(s+1,d)$.
\item $(q-1) \mathcal{Z}_2(s,d) \leq \mathcal{Z}_2(s+1,d)$.
\end{enumerate}
\end{enumerate}
\end{lemma}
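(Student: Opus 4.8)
The plan is to reduce every assertion to an explicit algebraic comparison after writing the two quantities in a common shape. Factoring out the top power,
\[
\mathcal{Z}_1(s,d)=(q-1)^{s-1}-(q-1)^{s-(k+2)}(q-1-l),
\]
while, on the first branch $1\le d\le\eta$, expanding the bracket gives
\[
\mathcal{Z}_2(s,d)=(q-1)^{s-1}-(q-1)^{s-(k+2)}(q-l)+(q-1)^{s-(k+3)}.
\]
I would first record that part~(1)(c) already subsumes (1)(a) and (1)(b): from the chain $\mathcal{Z}_2(s,d')\le\mathcal{Z}_1(s,d')\le\mathcal{Z}_2(s,d)\le\mathcal{Z}_1(s,d)$ one reads off $\mathcal{Z}_1(s,d')\le\mathcal{Z}_2(s,d)\le\mathcal{Z}_1(s,d)$ and $\mathcal{Z}_2(s,d')\le\mathcal{Z}_1(s,d')\le\mathcal{Z}_2(s,d)$. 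So the entire first part follows once (1)(c) is established.

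To obtain (1)(c) I would isolate two pointwise building blocks and then chain them. The first is (i) $\mathcal{Z}_2(s,d)\le\mathcal{Z}_1(s,d)$ for every $1\le d\le r$; a one-line subtraction in the two branches shows the gap equals $(q-1)^{s-(k+3)}(q-2)$ when $d\le\eta$ and equals $1$ when $\eta<d\le r$, both nonnegative. The second is (ii) $\mathcal{Z}_1(s,d)\le\mathcal{Z}_2(s,d+1)$ for $1\le d<r$. Granting (i) and (ii), for any $d'<d$ I alternate them, $\mathcal{Z}_2(s,d')\le\mathcal{Z}_1(s,d')\le\mathcal{Z}_2(s,d'+1)\le\mathcal{Z}_1(s,d'+1)\le\cdots\le\mathcal{Z}_2(s,d)\le\mathcal{Z}_1(s,d)$, which is exactly (1)(c) and incidentally shows $\mathcal{Z}_2$ is nondecreasing.

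The verification of (ii) is where the arithmetic of the representation $d=k(q-2)+l$ enters, and I expect it to be the main obstacle. Passing from $d$ to $d+1$ splits into cases according to whether $l<q-2$ (the pair becomes $(k,l+1)$) or $l=q-2$ (a carry to $(k+1,1)$), together with the possibility that the branch threshold is crossed exactly at $d=\eta$. In each case the difference $\mathcal{Z}_2(s,d+1)-\mathcal{Z}_1(s,d)$ collapses to a single nonnegative monomial, $(q-1)^{s-(k+3)}$ in the first sub-case and $(q-1)^{s-(k+4)}$ at a carry, or to $0$ at the boundary and within the second branch; the only delicate point is that the exponents remain nonnegative, which holds because $1\le d\le\eta$ forces $k\le s-3$ and a carry inside the first branch forces $k\le s-4$. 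I would organize this as a short table of the four transitions.

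For part~(2), 2(a) is immediate: since $k$ and $l$ depend only on $d$, multiplying the closed form for $\mathcal{Z}_1(s,d)$ by $(q-1)$ raises each exponent by one and literally produces $\mathcal{Z}_1(s+1,d)$. The subtlety in 2(b) is that the thresholds move with $s$: since $\eta_{s+1}=(q-2)(s-1)=r_s$, a degree lying in the \emph{second} branch for $s$ (where $\eta_s<d\le r_s$, so $k=s-2$) migrates to the \emph{first} branch for $s+1$. I would therefore split into Case~A, $1\le d\le\eta_s$, where both sides use the first branch and scaling by $(q-1)$ gives exact equality $(q-1)\mathcal{Z}_2(s,d)=\mathcal{Z}_2(s+1,d)$; and Case~B, $\eta_s<d\le r_s$, where comparing $\mathcal{Z}_2(s,d)=(q-1)^{s-1}-q+l$ against $\mathcal{Z}_2(s+1,d)$ evaluated on the first branch at $k=s-2$ yields $\mathcal{Z}_2(s+1,d)-(q-1)\mathcal{Z}_2(s,d)=1$. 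This gives the inequality, with equality precisely on Case~A. As a consistency check I would note the reading $\mathcal{Z}_1(s,d)=|\mathbb{T}_{s-1}|-d_1(C_{\mathbb{T}_{s-1}}(d))$ from Remark~\ref{dec21-17}, so that (1)(a) is also predicted by the nestedness $C_{\mathbb{T}_{s-1}}(d)\subset C_{\mathbb{T}_{s-1}}(d+1)$ of Theorem~\ref{theorem2}; the genuine work is confined to the carry/boundary bookkeeping in (ii) and to correctly locating $d$ relative to the shifted thresholds in 2(b).
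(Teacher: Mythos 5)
Your proposal is correct, and while its computations coincide with the paper's, its logical organization is genuinely different and a bit more economical. The paper proves the three parts of (1) separately: (1)(a) by reading $\mathcal{Z}_1(s,d)=(q-1)^{s-1}-d_1(C_{\mathbb{T}_{s-1}}(d))$ and citing the monotonicity of the minimum distance from \cite[Proposition 5.2]{algcodes}; (1)(b) by a case analysis ($l>1$ versus $l=1$, in each of the two branches) of the consecutive-degree comparison $\mathcal{Z}_2(s,d-1)\leq\mathcal{Z}_2(s,d)$; and (1)(c) by combining (1)(a) with the immediate bound $\mathcal{Z}_2(s,d)\leq\mathcal{Z}_1(s,d)$ and a further case analysis giving $\mathcal{Z}_1(s,d-1)\leq\mathcal{Z}_2(s,d)$. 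Your building blocks (i) and (ii) are precisely these last two ingredients, but your observation that interleaving them along $d',d'+1,\ldots,d$ yields the full chain (1)(c) --- and hence (1)(a) and (1)(b) as immediate corollaries --- eliminates both the separate computation for (1)(b) and the appeal to \cite{algcodes}, which you correctly demote to a consistency check. Your transition bookkeeping for (ii) also checks out: the difference $\mathcal{Z}_2(s,d+1)-\mathcal{Z}_1(s,d)$ equals $(q-1)^{s-(k+3)}$ generically in the first branch, equals $(q-1)^{s-(k+4)}$ at a carry (where $d+1\leq\eta$ indeed forces $k\leq s-4$), and equals $0$ both at the threshold $d=\eta$ and inside the second branch. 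Finally, in part (2)(b) your Case B is more careful than the paper's: since $(q-2)(s-1)$ is simultaneously $r$ for $s$ and $\eta$ for $s+1$, a degree with $(q-2)(s-2)<d\leq(q-2)(s-1)$ falls in the \emph{first} branch for $s+1$ with $k=s-2$, so $\mathcal{Z}_2(s+1,d)=(q-1)^s-(q-1)(q-l)+1$ and $\mathcal{Z}_2(s+1,d)-(q-1)\mathcal{Z}_2(s,d)=1$ exactly; the paper instead evaluates $\mathcal{Z}_2(s+1,d)$ as $(q-1)^s-(q-l)$, i.e., by the second-branch formula, which does not apply in this range --- a harmless slip, since the inequality it needs still holds a fortiori, but your computation is the correct one.
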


\begin{proof}
\begin{enumerate}
\item \begin{enumerate}
\item This result is an obvious consequence of the fact that $\mathcal{Z}_1(s,d)=(q-1)^{s-1}-d_1(C_{\mathbb{T}_{s-1}}(d))$, and \cite[Proposition 5.2]{algcodes}.
\item Let $2 \leq d \leq (q-2)(s-2)$. If $l >1$ then
\begin{align*}
\mathcal{Z}_2(s,d-1) 
& =(q-1)^{s-1}-(q-1)^{s-(k+2)}(q+1-l) \\
& +(q-1)^{s-(k+3)} 
\leq (q-1)^{s-1} \\
& -(q-1)^{s-(k+2)}(q-l) 
+(q-1)^{s-(k+3)}=\mathcal{Z}_2(s,d).
\end{align*}

If $l=1$ we obtain that
\begin{align*}
\mathcal{Z}_2(s,d-1) 
& =(q-1)^{s-1}-2(q-1)^{s-(k+1)} 
+(q-1)^{s-(k+2)} \\
& = (q-1)^{s-1}-(q-1)^{s-(k+2)}(q-1) 
-(q-1)^{s-(k+1)} \\
& +(q-1)^{s-(k+2)} 
\leq (q-1)^{s-1}-(q-1)^{s-(k+2)}(q-l) \\
& +(q-1)^{s-(k+3)}=\mathcal{Z}_2(s,d).
\end{align*}

Let $(q-2)(s-2) < d \leq (q-2)(s-1)$. If we take $l >1$ then
$$
\mathcal{Z}_2(s,d-1) = (q-1)^{s-1}-q+l-1 
\leq (q-1)^{s-1}-q+l=\mathcal{Z}_2(s,d).
$$

If $l=1$ then
$$
\mathcal{Z}_2(s,d-1) = (q-1)^{s-1}-2(q-1)+1
\leq (q-1)^{s-1}-q+1=\mathcal{Z}_2(s,d),
$$
and the claim follows.
\item It is immediate that for any $d$, $\mathcal{Z}_2(s,d) \leq \mathcal{Z}_1(s,d)$. Let $1 \leq d' <d \leq (q-2)(s-2)$. If $l >1$ then
\begin{align*}
\mathcal{Z}_1(s,d') & \leq \mathcal{Z}_1(s,d-1) =(q-1)^{s-1}-(q-1)^{s-(k+2)}(q-l) \\
& \leq (q-1)^{s-1}-(q-1)^{s-(k+2)}(q-l)
+(q-1)^{s-(k+3)} \\
& =\mathcal{Z}_2(s,d).
\end{align*}

If $l=1$ then
\begin{align*}
\mathcal{Z}_1(s,d') & \leq \mathcal{Z}_1(s,d-1)
=(q-1)^{s-1}-(q-1)^{s-(k+1)}\\
& \leq (q-1)^{s-1}-(q-1)^{s-(k+2)}(q-1) +(q-1)^{s-(k+3)} \\
& =\mathcal{Z}_2(s,d).
\end{align*}

Let $(q-2)(s-2)<d \leq (q-2)(s-1)$ and we suppose that $d'<d$. If $l >1$ then
\begin{align*}
\mathcal{Z}_1(s,d')\leq \mathcal{Z}_1(s,d-1) 
& =(q-1)^{s-1}-(q-1-l+1) \\
& = (q-1)^{s-1}-q+l 
=\mathcal{Z}_2(s,d). 
\end{align*}

If $l=1$ then
$$
\mathcal{Z}_1(s,d') \leq \mathcal{Z}_1(s,d-1) 
=(q-1)^{s-1}-(q-1)=\mathcal{Z}_2(s,d),
$$
and the claim follows.
\end{enumerate}
\item \begin{enumerate}
\item It is an obvious conclusion because of the definition of $\mathcal{Z}_1(s,d)$.
\item Let $1 \leq d \leq (q-2)(s-2)$. Then
\begin{align*}
(q-1) \mathcal{Z}_2(s,d) & = (q-1)^s -(q-1)^{s-(k+1)}(q-l) 
+(q-1)^{s-(k+2)} \\
& =\mathcal{Z}_2(s+1,d).
\end{align*}

If $(q-2)(s-2) < d \leq (q-2)(s-1)$ then
\begin{align*}
(q-1) \mathcal{Z}_2(s,d) & = (q-1)^s-(q-1)(q-l)
\leq (q-1)^s-(q-l) \\
& =\mathcal{Z}_2(s+1,d),
\end{align*}
\end{enumerate}
\end{enumerate}
and the whole claim follows.
\end{proof}
Moreover, we need the following definition: If we take $f \in K[X_1,\ldots,X_s]_d$ and $a \in K^*$ then $$f_a(X_1,\ldots,X_{s-1}):=f(X_1,\ldots,X_{s-1},aX_1).$$ 

Notice that $f_a \in K[X_1,\ldots,X_{s-1}]_d$. We assume that the
points in the projective space are in standard form, that is, the first non-zero entry from the left is $1$. If $[Q] \in \mathbb{P}^{s-2}$, $[Q]=[t_1: \cdots : t_{s-1}]$ we denote by $[Q_a]:=[t_1:\cdots:t_{s-1}:at_1] \in \mathbb{P}^{s-1}$.

\begin{lemma} \label{lemma6}
With the notation introduced above.
\begin{enumerate}
\item $f_a=0$ (the zero polynomial)  if and only if $aX_1-X_s$ divides $f$.
\item Let $P=[1:b_2:\cdots:b_s] \in \mathbb{T}_{s-1}$ and $Q=[1:b_2:\cdots:b_{s-1}] \in \mathbb{T}_{s-2}$ (obviously $[P]=[Q_{b_s}]$). Then
$f(P)=0$ if and only if $f_{b_s}(Q)=0$.
\end{enumerate}
\end{lemma}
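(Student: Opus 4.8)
The plan is to handle both parts through the single $K$-algebra substitution $X_s\mapsto aX_1$ that underlies the definition of $f_a$. For part (1), I would view $f$ as a polynomial in $X_s$ with coefficients in $R:=K[X_1,\ldots,X_{s-1}]$ and divide by the polynomial $X_s-aX_1$, which is monic in $X_s$: this gives $f=(X_s-aX_1)\,q+r$ with $q\in R[X_s]$ and remainder $r\in R$ having degree $0$ in $X_s$. Applying the substitution $X_s\mapsto aX_1$ (fixing $X_1,\ldots,X_{s-1}$) annihilates the first summand and returns $r$ unchanged, so $f_a=r$. Hence $f_a=0$ precisely when $r=0$, i.e. precisely when $X_s-aX_1$ divides $f$; since $aX_1-X_s=-(X_s-aX_1)$ differs by the unit $-1$, this is the same as $(aX_1-X_s)\mid f$. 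Equivalently and more cleanly, I would note that $f_a=\varphi(f)$ for the surjective map $\varphi\colon K[X_1,\ldots,X_s]\to R$ with $X_s\mapsto aX_1$, whose kernel is exactly the principal ideal $(X_s-aX_1)$ because $R[X_s]/(X_s-aX_1)\cong R$; the assertion is then just $\ker\varphi=(aX_1-X_s)$.

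For part (2), I would evaluate directly at the chosen standard representatives. Since the points are in standard form, $P$ is represented by $(1,b_2,\ldots,b_{s-1},b_s)$ and $Q$ by $(1,b_2,\ldots,b_{s-1})$, so by the very definition of $f_a$,
\[
f_{b_s}(Q)=f_{b_s}(1,b_2,\ldots,b_{s-1})=f(1,b_2,\ldots,b_{s-1},b_s\cdot 1)=f(P).
\]
Because $f$ is homogeneous, the vanishing $f(P)=0$ does not depend on the representative chosen, and therefore $f(P)=0$ if and only if $f_{b_s}(Q)=0$, which is exactly the claim (and consistent with $[P]=[Q_{b_s}]$).

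The calculations are short and I do not expect a genuine obstacle; the result is essentially the factor theorem applied in the variable $X_s$. The only point deserving care is the remainder/kernel identification in part (1): one must verify that the remainder $r$ really lies in $R=K[X_1,\ldots,X_{s-1}]$ (so that the substitution fixes it) and keep track of the unit factor $-1$ when passing between $X_s-aX_1$ and $aX_1-X_s$. With those two checks in place both equivalences follow immediately.
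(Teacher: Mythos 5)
Your proof is correct and follows essentially the same route as the paper: the paper also writes $f=(aX_1-X_s)g+r$ with $r\in K[X_1,\ldots,X_{s-1}]$ (via the division algorithm with a monomial order in which $X_s$ is largest, which amounts to your monic division in $K[X_1,\ldots,X_{s-1}][X_s]$), observes that substituting $X_s\mapsto aX_1$ gives $f_a=r$, and treats part (2) as immediate from the definitions. Your kernel-of-substitution reformulation and the explicit sign bookkeeping are just cleaner packagings of the same argument.
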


\begin{proof}
The second part of the proof is immediate from the definitions. In order to prove the first assertion we suppose that $f_a=0$. Fix a monomial ordering in which $X_s> \cdots >X_1$. By using the division algorithm there are $g, r \in K[X_1,\ldots,X_s]$ such that $f=(aX_1-X_s) g +r$, where $r=0$ or $r$ is a $K$--linear combination of monomials, none of which is divisible by $X_s$. Therefore $r \in K[X_1,\ldots,X_{s-1}]$. But
$$
0=f_a(X_1,\ldots,X_{s-1})=f(X_1,\ldots,X_{s-1},aX_1) 
=r(X_1,\ldots,X_{s-1}).
$$

Thus $r=0$ and $aX_1-X_s$ divides $f$. The converse is obvious.
\end{proof}

\begin{theorem} \label{otherbound}
Let $s, d \in \mathbb{N}$, $1 \leq d \leq (q-2)(s-2)$, $s \geq 3$, and let $f, g$ two linearly independent polynomials on $S_d \setminus I_{\mathbb{T}_{s-1}}(d)$. Then
$$
|Z_{\mathbb{T}_{s-1}}(f) \cap Z_{\mathbb{T}_{s-1}}(g)| \leq (q-1)^{s-(k+3)} [(q-1)^{k+2} 
-(q-1)(q-l)+1],
$$
where $k$ and $l$ are the unique integers such that  $d=k(q-2)+l$, $k \geq 0$, $1 \leq l \leq q-2$.
\end{theorem}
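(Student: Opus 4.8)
The plan is to argue by induction on $s$, slicing the torus along the hyperplanes $X_s=aX_1$. Since the quantity we bound governs the dimension of the subcode spanned by ${\rm ev}_d(f)$ and ${\rm ev}_d(g)$, I read ``linearly independent'' as linear independence modulo $I_{\mathbb{T}_{s-1}}$ (equivalently, no nonzero member of the pencil $V=\langle f,g\rangle$ lies in $I_{\mathbb{T}_{s-1}}$); this is what the application to the second generalized Hamming weight needs, and it is essential, since a pencil containing a nonzero element of $I_{\mathbb{T}_{s-1}}$ would reduce to a single form and could reach the larger value $\mathcal{Z}_1(s,d)$. Writing each torus point in standard form $[1:b_2:\cdots:b_s]$ and setting $a=b_s\in K^*$, Lemma~\ref{lemma6}(2) yields the partition
$$
|Z_{\mathbb{T}_{s-1}}(f)\cap Z_{\mathbb{T}_{s-1}}(g)|=\sum_{a\in K^*}m_a,\qquad m_a:=|Z_{\mathbb{T}_{s-2}}(f_a)\cap Z_{\mathbb{T}_{s-2}}(g_a)|,
$$
reducing the count to $q-1$ problems one dimension down. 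To make the induction close I would prove the bound for the full range $1\le d\le(q-2)(s-1)$ with $\mathcal{Z}_2(s,d)$ as defined after Theorem~\ref{theoremzeroes}, so that the sliced instances (which live in $\mathbb{T}_{s-2}$ and carry degrees up to $(q-2)(s-2)=(q-2)((s-1)-1)$, beyond the range of Theorem~\ref{otherbound} for $s-1$) still fall inside the inductive hypothesis.

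For the base case $s=3$ (where $k=0$, $l=d$, and the target is $\mathcal{Z}_2(3,d)=1+(d-1)(q-1)$) I would not slice but split off the greatest common divisor. Writing $\delta=\gcd(f,g)$, $f=\delta\hat f$, $g=\delta\hat g$ with $\gcd(\hat f,\hat g)=1$, $\deg\delta=e$ and $u:=d-e$, linear independence forces $u\ge1$, and
$$
Z_{\mathbb{T}_2}(f)\cap Z_{\mathbb{T}_2}(g)=Z_{\mathbb{T}_2}(\delta)\cup\bigl(Z_{\mathbb{T}_2}(\hat f)\cap Z_{\mathbb{T}_2}(\hat g)\bigr).
$$
Since $\delta\notin I_{\mathbb{T}_2}$, the minimum distance formula (Remark~\ref{dec21-17}) bounds $|Z_{\mathbb{T}_2}(\delta)|$ by $\mathcal{Z}_1(3,e)=e(q-1)$, while B\'ezout bounds $|Z_{\mathbb{T}_2}(\hat f)\cap Z_{\mathbb{T}_2}(\hat g)|$ by $u^2$ because $\hat f,\hat g$ are coprime forms in three variables. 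The base case then follows from $e(q-1)+u^2\le1+(d-1)(q-1)$, which simplifies to $(u-1)(u-q+2)\le0$ and holds since $1\le u\le d\le q-2$.

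For the inductive step I would classify each slice. Call $a$ \emph{good} if $f_a,g_a$ are linearly independent modulo $I_{\mathbb{T}_{s-2}}$; then the inductive hypothesis gives $m_a\le\mathcal{Z}_2(s-1,d)$, and if every slice is good one gets $\sum_a m_a\le(q-1)\mathcal{Z}_2(s-1,d)\le\mathcal{Z}_2(s,d)$ directly from Lemma~\ref{ineq2}, part (2)(b). Call $a$ \emph{bad} when some nonzero $h\in V$ vanishes identically on the slice, i.e.\ $h_a\in I_{\mathbb{T}_{s-2}}$, and split these into \emph{half-bad} slices, where a member of the pencil survives so that $m_a\le\mathcal{Z}_1(s-1,d)$, and \emph{full-bad} slices, where both $f$ and $g$ vanish and $m_a=(q-1)^{s-2}$. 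The comparison inequality $\mathcal{Z}_1(s-1,d-1)\le\mathcal{Z}_2(s-1,d)$ furnished by Lemma~\ref{ineq2}, part (1)(c), is exactly the device that lets a slice whose effective count has dropped to a single form of degree at most $d-1$ be treated as favorably as a good slice.

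The main obstacle is the control of the bad, and especially the full-bad, slices: each full-bad slice contributes $(q-1)^{s-2}$, which dwarfs the good-slice bound $\mathcal{Z}_2(s-1,d)$, so their number must be tightly restricted. The tool is that every nonzero $h\in V$ lies outside $I_{\mathbb{T}_{s-1}}$, whence $|Z_{\mathbb{T}_{s-1}}(h)|\le\mathcal{Z}_1(s,d)$ by the minimum distance (Remark~\ref{dec21-17}); a slice on which $h$ vanishes identically accounts for $(q-1)^{s-2}$ of those zeros, so any fixed $h$ can vanish on at most $\mathcal{Z}_1(s,d)/(q-1)^{s-2}$ slices. Converting this global cap into the sharp inequality $\sum_a m_a\le\mathcal{Z}_2(s,d)$ — absorbing the excess of the bad slices into the slack in Lemma~\ref{ineq2}, part (2)(b), and reconciling it with the arithmetic of $d=k(q-2)+l$ — is the delicate part, and is where I expect the full strength of the estimates assembled in Lemma~\ref{ineq2} to be indispensable.
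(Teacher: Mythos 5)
Your overall architecture (induction on $s$, slicing along $X_s=aX_1$ via Lemma \ref{lemma6}, comparisons from Lemma \ref{ineq2}) is the same as the paper's, and your base case is a genuinely different, correct, and in fact cleaner argument: the paper settles $s=3$ by a four-case analysis according to whether $\mathcal{A}=\{a:f_a=0\}$ and $\mathcal{B}=\{b:g_b=0\}$ are empty, disjoint, or meet, invoking the $s=2$ generalized-weight formula of \cite{GHW2014} slice by slice, whereas your splitting $f=\delta\hat f$, $g=\delta\hat g$, with $|Z_{\mathbb{T}_2}(\delta)|\le e(q-1)$ from the minimum-distance formula and B\'ezout's bound $u^2$ for the coprime parts, reduces everything to $(u-1)(u-q+2)\le 0$. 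Your reading of ``linearly independent'' as independence modulo $I_{\mathbb{T}_{s-1}}$ is also the reading that the application in Theorem \ref{theorem3} actually needs.

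The inductive step, however, has a genuine gap, located exactly where you say the argument is ``delicate.'' The accounting you propose for bad slices cannot be closed, because Lemma \ref{ineq2}(2)(b) has \emph{no slack} to absorb anything: for $1\le d\le (q-2)(s-2)$ it is the equality $(q-1)\mathcal{Z}_2(s,d)=\mathcal{Z}_2(s+1,d)$. Concretely, take $q=5$, four variables, $d=3$: the target is $\mathcal{Z}_2(4,3)=36$, a good slice is bounded by $\mathcal{Z}_2(3,3)=9$, and a single full-bad slice contributes $(q-1)^2=16$, so one full-bad slice plus three good ones already gives $16+27=43>36$. No cap on the \emph{number} of bad slices repairs this; what saves the count is that a full-bad slice at $a$ forces, by Lemma \ref{lemma6}(1), the linear factor $aX_1-X_{s+1}$ in \emph{both} $f$ and $g$, so after writing $f=f'H$, $g=g'H$ the remaining slices are governed by forms of the \emph{lower} degree $d-r_4$ (in the example, $16+3\cdot\mathcal{Z}_2(3,2)=16+15=31\le 36$). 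The paper's Cases II--IV consist precisely of this factor-out-and-drop-the-degree mechanism together with the explicit verifications, e.g.\ $r_4(q-1)^{s-1}+(q-1-r_4)\mathcal{Z}_2(s,d-r_4)\le\mathcal{Z}_2(s+1,d)$ and $r_2\mathcal{Z}_1(s,d)+(q-1-r_2)\mathcal{Z}_1(s,d-r_2)\le\mathcal{Z}_2(s+1,d)$; none of this is carried out in your sketch, and your degree-drop remark via $\mathcal{Z}_1(s-1,d-1)\le\mathcal{Z}_2(s-1,d)$ attaches the drop to the wrong polynomial: on a half-bad slice the surviving member of the pencil still has degree $d$, not $d-1$ (the drop occurs on the \emph{good} slices, coming from the factor structure of the polynomial that has vanishing slices). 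A second, smaller gap: you rightly strengthen the claim to the full range $1\le d\le (q-2)(s-1)$ so that the induction closes, but your base case only covers $d\le q-2$, since B\'ezout's bound $u^2$ is vacuous beyond that; the strengthened claim for $s=3$, $q-2<d\le 2(q-2)$ is left unproved (it does follow quickly from $d_2>d_1$ and Remark \ref{dec21-17}, which is how the paper handles the top range inside Theorem \ref{theorem3}, but that step must be supplied).
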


\begin{proof}
Notice that if $f \in K[X_1,\ldots,X_s]_d$,
\begin{equation} \label{form1}
f=\sum_{i=1}^{r_1} (a_iX_1-X_s)^{n_i} f',
\end{equation}
where $a_1,\dots,a_{r_1}$ are different non-zero elements of $K$, $f'_a \neq 0$ for all $a \in K^*$, and
\begin{equation} \label{form2}
\tilde{f}=\sum_{i=1}^{r_1} (a_iX_1-X_s) f',
\end{equation}
then $|Z_{\mathbb{T}_{s-1}}(f)|=|Z_{\mathbb{T}_{s-1}}(\tilde{f})|$. Thus there is no loss of generality if we assume that any polynomial of the form (\ref{form1}) can be studied as if it were of the form (\ref{form2}).
We proceed by induction on $s$ (the number of variables). Let $s=3$ and $f,g$ be two linearly independent polynomials in $K[X_1,X_2,X_3]_d \setminus I_{\mathbb{T}_2}(d)$  with $1 \leq d \leq (3-2)(q-2)=q-2$. For all $a \in K^*$, $f_a, g_a \in K[X_1,X_2]_d$. Let $\mathcal{A}:=\{ a \in K^* : f_a=0\}$, and $\mathcal{B}:=\{b \in K^*: g_b=0\}$. We consider the following cases:

{\bf{Case A:}} $\mathcal{A}=\mathcal{B}=\emptyset$. Thus $f_a \neq 0$, $g_a \neq 0$ for all $a \in K^*$. By using \cite[Equation (6)]{GHW2014} for $r=2$ we obtain that
$$
|Z_{\mathbb{T}_1}(f_a) \cap Z_{\mathbb{T}_1}(g_a)| \leq (q-1)-(q-d)=d-1=l-1.
$$

If $a$ runs over all the elements of $K^*$ and we use Lemma \ref{lemma6} then
$$
|Z_{\mathbb{T}_2}(f) \cap Z_{\mathbb{T}_2}(g)| \leq (q-1)(l-1) \leq (q-1)(l-1)+1,
$$
and the case A follows.

{\bf{Case B:}} $\mathcal{A}=\{a_1,\ldots,a_{r_1}\} \neq \emptyset$, $\mathcal{B}=\{b_1,\ldots,b_{r_2}\} \neq \emptyset$, and $\mathcal{A} \cap \mathcal{B}= \emptyset$. In this case we can write $f=f'H_1$, where $H_1=\prod_{i=1}^{r_1} (a_iX_1-X_3)$ and $f'_a \neq 0$ for all $a \in K^*$. Also $g=g' H_2$ with $H_2=\prod_{i=1}^{r_2} (b_iX_1-X_3)$ and $g'_a \neq 0$ for all $a \in K^*$. As $Z_{\mathbb{T}_{2}}(H_1) \cap Z_{\mathbb{T}_2}(H_2)=\emptyset$, we obtain that
\begin{align} \label{eq7}
|Z_{\mathbb{T}_2}(f) \cap Z_{\mathbb{T}_2}(g)| & =|Z_{\mathbb{T}_2}(H_1) \cap Z_{\mathbb{T}_2}(g')| 
+|Z_{\mathbb{T}_2}(H_2) \cap Z_{\mathbb{T}_2}(f')| \nonumber \\
& +|Z_{\mathbb{T}_2}(f') \cap Z_{\mathbb{T}_2}(g') \setminus Z_{\mathbb{T}_2}(H_1H_2)|. 
\end{align}

Notice that $(H_1)_{a_i}=0$ for all $i=1\ldots,r_1$. As $|Z_{\mathbb{T}_{s-1}}(f)| \leq \mathcal{Z}_1(s,d)$ and $\deg g'_{a_i}=d-r_2$ we get
$$
|Z_{\mathbb{T}_1}(g'_{a_i})| \leq l-r_2.
$$

As $i \in \{1,\ldots,r_1\}$ we obtain that
$$
|Z_{\mathbb{T}_2}(H_1) \cap Z_{\mathbb{T}_2}(g')| \leq r_1(l-r_2).
$$

In the same way
$$
|Z_{\mathbb{T}_2}(H_2) \cap Z_{\mathbb{T}_2}(f')| \leq r_2(l-r_1).
$$

Let $r_3=\min \{r_1,r_2\}$ and consider the polynomials $f'':=X_1^{r_1-r_3} f'$ and $g'':=X_1^{r_2-r_3} g'$. Notice that $Z_{\mathbb{T}_2}(f'')=Z_{\mathbb{T}_2}(f')$, $Z_{\mathbb{T}_2}(g'')=Z_{\mathbb{T}_2}(g')$, and  $\deg f''=\deg g''=d-r_3$. If we proceed similarly to case A then
\begin{align*}
|Z_{\mathbb{T}_2}(f') \cap Z_{\mathbb{T}_2}(g') \setminus Z_{\mathbb{T}_2}(H_1H_2)| 
& \leq (q-1-r_1-r_2)(l-1-r_3) \\
& \leq (q-1-r_1-r_2)(l-r_3),
\end{align*}
and by using Equation (\ref{eq7}) we conclude that
\begin{align*}
 |Z_{\mathbb{T}_2}(f) \cap Z_{\mathbb{T}_2}(g)| & \leq r_1(l-r_2) +r_2(l-r_1)+(q-1-r_1-r_2)(l-r_3) \\
& = -2r_1r_2+(q-1)(l-r_3)+r_1r_3 +r_2r_3 \leq (q-1)(l-r_3) \\
& = (q-1)(l-1)-(q-1)(r_3-1) \leq (q-1)(l-1)+1,
\end{align*}
and the case B follows.

{\bf{Case C:}} $\mathcal{A}=\emptyset$, $\mathcal{B}=\{b_1,\ldots,b_{r_2}\} \neq \emptyset$. Thus $g=g' H_2$, where $g'$ and $H_2$ were defined in case B, $f_a \neq 0$ and $g'_{a} \neq 0$ for all $a \in K^*$. Therefore, as above,
$$
|Z_{\mathbb{T}_2}(H_2) \cap Z_{\mathbb{T}_2}(f)| \leq r_2l.
$$

If we define the polynomial $X_1^{r_2}g'$, then this polynomial has the same zeroes than $g'$ and its degree is $d$. Moreover
$$
|Z_{\mathbb{T}_1}(X_1^{r_2}g'_{b_i}) \cap Z_{\mathbb{T}_1}(f_{b_i})| =| Z_{\mathbb{T}_1}(g'_{b_i}) \cap Z_{\mathbb{T}_1}(f_{b_i})| 
\leq |Z_{\mathbb{T}_1}(g'_{b_i})| \leq l-r_2.
$$

Due to the fact that $i=1,\ldots,r_2$, we get
$$
|Z_{\mathbb{T}_2}(f) \cap Z_{\mathbb{T}_2}(g') \setminus Z_{\mathbb{T}_{2}}(H_2)| \leq (q-1-r_2)(l-r_2),
$$
and then
\begin{align*}
|Z_{\mathbb{T}_2}(f) \cap Z_{\mathbb{T}_2}(g)| & = |Z_{\mathbb{T}_2}(H_2) \cap Z_{\mathbb{T}_2}(f)| 
+|Z_{\mathbb{T}_2}(f) \cap Z_{\mathbb{T}_2}(g') \setminus Z_{\mathbb{T}_{2}}(H_2)|  \\
& \leq r_2l+(q-1-r_2)(l-r_2) 
=(q-1)(l-r_2)+r_2^2 \\
& =(q-1)(l-1)-(r_2-1)(q-1)+r_2^2\\
& =(q-1)(l-1)+1 
-(q-2-r_2)(r_2-1) \\
& \leq (q-1)(l-1)+1,
\end{align*}
because $1 \leq r_2 \leq q-2$, and the case C follows.

{\bf{Case D:}} $\mathcal{A} \cap \mathcal{B}=\{c_1,\ldots,c_{r_4}\} \neq \emptyset$. Thus $f=f'H$, $g=g'H$ with $H=\prod_{i=1}^{r_4}(c_iX_1-X_3)$, and $f'_{c_i} \neq 0$, $g'_{c_i} \neq 0$ for all $i=1,\ldots,r_4$. Notice that $1 \leq r_4 \leq l=d$ and if $r_4=l$ then $f$ and $g$ are linearly dependent polynomials, which is wrong. That is, $r_4<l$. Notice that, by using \cite[Equation (6)]{GHW2014} with $r=2$, 
$$
|Z_{\mathbb{T}_1}(f'_{c_i}) \cap Z_{\mathbb{T}_1}(g'_{c_i})| \leq l-1-r_4.
$$

Therefore
$$
|Z_{\mathbb{T}_2}(f') \cap Z_{\mathbb{T}_2}(g') \setminus Z_{\mathbb{T}_2}(H)| \leq |Z_{\mathbb{T}_2}(f') \cap Z_{\mathbb{T}_2}(g')|  
\leq (q-1-r_4)(l-1-r_4),
$$
and thus
\begin{align*}
|Z_{\mathbb{T}_2}(f) \cap Z_{\mathbb{T}_2}(g)| & =|Z_{\mathbb{T}_2}(H)| 
+|Z_{\mathbb{T}_2}(f') \cap Z_{\mathbb{T}_2}(g') \setminus Z_{\mathbb{T}_2}(H)| \\
& \leq r_4(q-1)+(q-1-r_4)(l-1-r_4) \\
& =(q-1)(l-1)-r_4(l-1-r_4) 
\leq (q-1)(l-1)+1,
\end{align*}
and the case D follows. 

Cases A, B, C, and D prove the claim for $s=3$. We assume that the result follows for $s$ and we will prove it for $s+1$. Let $f$ and $g$ be two linearly independent polynomials in $K[X_1,\ldots,X_{s+1}]_d \setminus I_{\mathbb{T}_s}(d)$. We continue using the notation for the sets $\mathcal{A}$ and $\mathcal{B}$ introduced above. Although this proof is quite similar to the case $s=3$, there are some details that must be explained. We divide the proof into the following cases.

{\bf{Case I:}} $\mathcal{A}=\mathcal{B}=\emptyset$. Thus $f_a \neq 0$, $g_a \neq 0$ for all $a \in K^*$. By the inductive hypothesis we know that
$$
|Z_{\mathbb{T}_{s-1}}(f_a) \cap Z_{\mathbb{T}_{s-1}}(g_a)| \leq \mathcal{Z}_2(s,d).
$$

Therefore, by using Lemmas \ref{ineq2} and \ref{lemma6} we get
$$
|Z_{\mathbb{T}_{s}}(f) \cap Z_{\mathbb{T}_{s}}(g)| \leq (q-1) \mathcal{Z}_2(s,d) \leq \mathcal{Z}_2(s+1,d),
$$
and the case I follows.

{\bf{Case II:}} $\mathcal{A}=\{a_1,\ldots,a_{r_1}\} \neq \emptyset$, $\mathcal{B}=\{b_1,\ldots,b_{r_2}\} \neq \emptyset$, and $\mathcal{A} \cap \mathcal{B}= \emptyset$. In this case we can write $f=f'H_1$, where $H_1=\prod_{i=1}^{r_1} (a_iX_1-X_{s+1})$ and $f'_a \neq 0$ for all $a \in K^*$. Also $g=g' H_2$ with $H_2=\prod_{i=1}^{r_2} (b_iX_1-X_{s+1})$ and $g'_a \neq 0$ for all $a \in K^*$. We observe that  $f' \in K[X_1,\ldots,X_{s+1}]_{d-r_1}$, and $g' \in K[X_1,\ldots,X_{s+1}]_{d-r_2}$. As $Z_{\mathbb{T}_{s}}(H_1) \cap Z_{\mathbb{T}_s}(H_2)=\emptyset$, we obtain that
\begin{align*}
|Z_{\mathbb{T}_{s}}(f) \cap Z_{\mathbb{T}_{s}}(g)| & = |Z_{\mathbb{T}_{s}}(H_2) \cap Z_{\mathbb{T}_{s}}(f')| 
+|Z_{\mathbb{T}_{s}}(H_1) \cap Z_{\mathbb{T}_{s}}(g')| \\
& + |Z_{\mathbb{T}_{s}}(f') \cap Z_{\mathbb{T}_{s}}(g') \setminus Z_{\mathbb{T}_s}(H_1H_2)|.
\end{align*}

Also, by the definition of $\mathcal{Z}_1(s,d)$, we get
$$
|Z_{\mathbb{T}_{s-1}}(g'_a)| \leq \mathcal{Z}_1(s,d-r_2),
$$
for all $i=1,\ldots,r_1$. By using the fact that $(H_1)_{a_i}=0$ for all $i=1,\ldots,r_1$ and Lemma \ref{ineq2}, we obtain that
\begin{equation} \label{l1}
|Z_{\mathbb{T}_{s}}(g') \cap Z_{\mathbb{T}_{s}}(H_1)| \leq r_1 \mathcal{Z}_1(s,d-r_2) \leq r_1 \mathcal{Z}_2(s,d).
\end{equation}

In exactly the same way we get
\begin{equation} \label{l2}
|Z_{\mathbb{T}_{s}}(f') \cap Z_{\mathbb{T}_{s}}(H_2)| \leq r_2 \mathcal{Z}_1(s,d-r_1) \leq r_2 \mathcal{Z}_2(s,d).
\end{equation}

Let $r_3=\min \{r_1,r_2\}$ and consider the polynomials $f'':=X_1^{r_1-r_3} f'$ and $g'':=X_1^{r_2-r_3} g'$. Notice that $Z_{\mathbb{T}_{s}}(f'')=Z_{\mathbb{T}_{s}}(f')$, $Z_{\mathbb{T}_{s}}(g'')=Z_{\mathbb{T}_{s}}(g')$, and  $\deg f''=\deg g''=d-r_3$. Thus, by the inductive hypothesis,
$$
|Z_{\mathbb{T}_{s-1}}(f'_a) \cap Z_{\mathbb{T}_{s-1}}(g'_a)|=|Z_{\mathbb{T}_{s-1}}(f''_a) \cap Z_{\mathbb{T}_{s-1}}(g''_a)| 
\leq \mathcal{Z}_2(s,d-r_3).
$$

As $|K^*-(\mathcal{A} \cup \mathcal{B})|=q-1-r_1-r_2$, then
\begin{align} \label{l3}
|Z_{\mathbb{T}_{s}}(f') \cap Z_{\mathbb{T}_{s}}(g' )\setminus Z_{\mathbb{T}_{s}}(H_1H_2)| 
& \leq (q-1-r_1-r_2) \mathcal{Z}_2(s,d-r_3) \nonumber \\
& \leq (q-1-r_1-r_2) \mathcal{Z}_2(s,d). 
\end{align}

By using Equations (\ref{l1}), (\ref{l2}), (\ref{l3}) and Lemma \ref{ineq2}, we conclude that
$$
|Z_{\mathbb{T}_{s}}(f) \cap Z_{\mathbb{T}_{s}}(g)| \leq (q-1) \mathcal{Z}_2(s,d) \leq \mathcal{Z}_2(s+1,d),
$$
and the case II follows.

{\bf{Case III:}} $\mathcal{A}=\emptyset$, $\mathcal{B}=\{b_1,\ldots,b_{r_2}\} \neq \emptyset$. Thus $g=g' H_2$, where $g'$ and $H_2$ were defined in case II, $f_a \neq 0$ for all $a \in K^*$, and $g'_{b_i} \neq 0$ for all $i=1,\ldots,r_2$. Similar to the previous cases we get
$$
|Z_{\mathbb{T}_{s}}(f) \cap Z_{\mathbb{T}_{s}}(g)| \leq r_2 \mathcal{Z}_1(s,d)+(q-1-r_2) \mathcal{Z}_1(s,d-r_2).
$$

If $r_2 <l$ then
\begin{align*}
& r_2 \mathcal{Z}_1(s,d)+(q-1-r_2) \mathcal{Z}_1(s,d-r_2)  \\
&= \mathcal{Z}_2(s+1,d)-(q-1)^{s-(k+2)}(q-2-r_2)(r_2-1) 
\leq \mathcal{Z}_2(s+1,d).
\end{align*}

If $r_2 \geq l$, as $r_2=|\mathcal{B}| \leq |K^*|=q-1$, then $1 \leq l \leq r_2 \leq q-1$. If $r_2=q-1$, then $g_a=0$ for all $a \in K^*$. Thus $g \in I_{\mathbb{T}_{s}}(d)$, which is false. Therefore $1 \leq l \leq r_2 \leq q-2$. Moreover
\begin{align*}
& r_2 \mathcal{Z}_1(s,d)+(q-1-r_2) \mathcal{Z}_1(s,d-r_2) 
= \mathcal{Z}_2(s+1,d) \\
& -(q-1)^{s-(k+2)}[r_2(q-1-l)+1] \\
& -(q-1)^{s-(k+1)}[(q-1-r_2)(r_2+1-l)-(q-l)] \\
& =\mathcal{Z}_2(s+1,d)-(q-1)^{s-(k+2)}[r_2(q-1-l)+1] \\
& -(q-1)^{s-(k+1)}[(q-2-r_2)(r_2-l)-1] 
=\mathcal{Z}_2(s+1,d) \\
& -(q-1)^{s-(k+2)}[(q-1)[(q-2-r_2) 
(r_2-l)-1]+r_2(q-1-l)+1] \\
& \leq \mathcal{Z}_2(s+1,d)-(q-1)^{s-(k+2)}(l-1)(q-2-r_2) 
\leq \mathcal{Z}_2(s+1,d),
\end{align*}
and the case III follows.

{\bf{Case IV:}}  $\mathcal{A} \cap \mathcal{B}=\{c_1,\ldots,c_{r_4}\} \neq \emptyset$. Thus $f=f'H$, $g=g'H$ with $H=\prod_{i=1}^{r_4}(c_iX_1-X_3)$, and $f'_{c_i} \neq 0$, $g'_{c_i} \neq 0$ for all $i=1,\ldots,r_4$. We know that
$$
|Z_{\mathbb{T}_{s}}(f) \cap Z_{\mathbb{T}_{s}}(g)| = |Z_{\mathbb{T}_{s}}(H)|+|Z_{\mathbb{T}_{s}}(f') \cap Z_{\mathbb{T}_{s}}(g') \setminus Z_{\mathbb{T}_{s}}(H)|.
$$

By the inductive hypothesis,
$$
|Z_{\mathbb{T}_{s-1}}(f'_{c_i}) \cap Z_{\mathbb{T}_{s-1}}(g'_{c_i})| \leq \mathcal{Z}_2(s,d-r_4),
$$
for all $i=1,\ldots,r_4$. Therefore
$$
|Z_{\mathbb{T}_{s}}(f') \cap Z_{\mathbb{T}_{s}}(g') \setminus Z_{\mathbb{T}_{s}}(H)| \leq (q-1-r_4) \mathcal{Z}_2(s,d-r_4).
$$

Also, as $H_{c_i}=0$ for all $i=1,\ldots,r_4$, we get that $|Z_{\mathbb{T}_{s}}(H)| \leq r_4(q-1)^{s-1}$, and thus
$$
|Z_{\mathbb{T}_{s}}(f) \cap Z_{\mathbb{T}_{s}}(g)| \leq r_4(q-1)^{s-1}+(q-1-r_4) \mathcal{Z}_2(s,d-r_4).
$$

If $r_4 <l$ then
\begin{align*}
& r_4(q-1)^{s-1}+(q-1-r_4) \mathcal{Z}_2(s,d-r_4) \\
& = (q-1) \mathcal{Z}_2(s,d)-r_4(q-1)^{s-(k+3)}[(q-1)^2 
-(q-1)(q+r_4-l)+1] \\
& \leq (q-1) \mathcal{Z}_2(s,d)-r_4(q-1)^{s-(k+3)} \leq \mathcal{Z}_2(s+1,d).
\end{align*}

If $r_4 \geq l$, $d=k(q-2)+l$, then $d-r_4=(k-1)(q-2)+q-2+l-r_4$. Therefore it is easy to see that
\begin{align*}
& r_4(q-1)^{s-1}+(q-1-r_4) \mathcal{Z}_2(s,d-r_4) 
=(q-1)^s \\
& -(q-1)^{s-(k+1)}(r_4+2-l)(q-1-r_4) 
+ (q-1-r_4)(q-1)^{s-(k+2)} \\
& =\mathcal{Z}_2(s+1,d) 
+ (q-1)^{s-(k+2)}[(q-1)(q-l) \\
& -(q-1)(r_4+2-l) 
\cdot (q-1-r_4)+q-2-r_4] \leq \mathcal{Z}_2(s+1,d).
\end{align*}

Thus case IV follows and so does the claim.
\end{proof}

\begin{remark}
If $s=2$ and $q=3$ then (see \cite {GHW2014}) $d_2(C_{\mathbb{T}_{1}}(d))=2$ for all $d \geq 1$. Moreover if $q>3$ the second generalized Hamming weight of $C_{\mathbb{T}_{1}}(d)$ is given by (see \cite[Equation (6)]{GHW2014})
\begin{equation} \label{eqshw3}
d_2(C_{\mathbb{T}_1}(d))= \left\{
\begin{array}{lll}
q-d & {\mbox{if}} & 1 \leq d \leq q-3, \\
2 & {\mbox{if}} & d>q-3.
\end{array} \right.
\end{equation}
\end{remark}

\begin{theorem} \label{theorem3}
The second generalized Hamming weight of the code $C_{\mathbb{T}_{s-1}}(d)$, $s \geq 3$, $d \geq 1$ is given by
\begin{equation} \label{eqshw}
d_2(C_{\mathbb{T}_{s-1}}(d))= 
\left\{
\begin{array}{lll}
(q-1)^{s-(k+3)}[(q-1)(q-l)-1] & {\mbox{if}} & 1 \leq d \leq \eta, \\
q-l & {\mbox{if}} & \eta < d < r, \\
2 & \mbox{if} & d \geq r,
\end{array} \right. 
\end{equation}
where $k$ and $l$ are the unique integers such that  $d=k(q-2)+l$, $k \geq 0$, $1 \leq l \leq q-2$, $\eta=(q-2)(s-2)$ and $r=(q-2)(s-1)$.
\end{theorem}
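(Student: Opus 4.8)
The plan is to convert the computation of $d_2$ into the extremal zero-counting problem already settled by Theorems \ref{theoremzeroes} and \ref{otherbound}. A two-dimensional subcode $\mathcal{D}\subseteq C_{\mathbb{T}_{s-1}}(d)$ is the image of a pencil $\langle f,g\rangle$ with $f,g\in S_d$ whose evaluations are linearly independent, that is, with $f,g$ linearly independent modulo $I_{\mathbb{T}_{s-1}}(d)$. A coordinate fails to lie in $\mathrm{supp}(\mathcal{D})$ precisely when every element of the pencil vanishes there, i.e. when $f$ and $g$ both vanish, so $|\mathrm{supp}(\mathcal{D})|=(q-1)^{s-1}-|Z_{\mathbb{T}_{s-1}}(f)\cap Z_{\mathbb{T}_{s-1}}(g)|$. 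Minimizing the support is thus the same as maximizing the number of common zeros, and I would record the identity
$$d_2(C_{\mathbb{T}_{s-1}}(d))=(q-1)^{s-1}-\max\{\,|Z_{\mathbb{T}_{s-1}}(f)\cap Z_{\mathbb{T}_{s-1}}(g)|\,\},$$
the maximum being over pairs $f,g\in S_d$ that are linearly independent modulo $I_{\mathbb{T}_{s-1}}(d)$.

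For $1\le d\le\eta$ I would bound the right-hand maximum by $\mathcal{Z}_2(s,d)$ from both sides. Any admissible pair is in particular linearly independent in $S_d\setminus I_{\mathbb{T}_{s-1}}(d)$, so Theorem \ref{otherbound} yields $|Z_{\mathbb{T}_{s-1}}(f)\cap Z_{\mathbb{T}_{s-1}}(g)|\le\mathcal{Z}_2(s,d)$, hence $d_2\ge(q-1)^{s-1}-\mathcal{Z}_2(s,d)$. For the reverse inequality I would take the explicit pair $F,G$ produced in the proof of Theorem \ref{theoremzeroes}, which attains $\mathcal{Z}_2(s,d)$ common zeros; once $F,G$ are shown to be linearly independent modulo $I_{\mathbb{T}_{s-1}}(d)$ they span a genuine two-dimensional subcode of support size $(q-1)^{s-1}-\mathcal{Z}_2(s,d)$, giving $d_2\le(q-1)^{s-1}-\mathcal{Z}_2(s,d)$. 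A direct simplification gives $(q-1)^{s-1}-\mathcal{Z}_2(s,d)=(q-1)^{s-(k+3)}[(q-1)(q-l)-1]$, which is the first branch of the formula.

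For $\eta<d<r$ Theorem \ref{otherbound} is unavailable, so I would pin the value by a two-sided estimate of a different nature. Case III of Theorem \ref{theoremzeroes} again supplies a pair $F,G$ with exactly $(q-1)^{s-1}-q+l$ common zeros, so after the linear-independence check it gives $d_2\le q-l$. For the matching lower bound I would invoke the strict monotonicity $d_1<d_2$ of the Wei hierarchy (the same fact used in Proposition \ref{proposition1}) together with Remark \ref{dec21-17}: here $k=s-2$, whence $d_1(C_{\mathbb{T}_{s-1}}(d))=q-1-l$ and therefore $d_2\ge d_1+1=q-l$. Finally, for $d\ge r$ the torus, being a complete intersection with $a$-invariant $a_X=r-1$, satisfies $d\ge a_X+1$, so Remark \ref{remark1} gives $C_{\mathbb{T}_{s-1}}(d)=K^{(q-1)^{s-1}}$ and $d_2=2$ at once.

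The step I expect to be the main obstacle is the linear-independence verification underlying every achievability argument above: the extremal polynomials of Theorem \ref{theoremzeroes} carry a large common factor (the block $H_k$ together with $\prod_{i=1}^{l-1}(\beta^iX_1-X_{k+2})$), so it is not automatic that $\alpha F+\beta G\notin I_{\mathbb{T}_{s-1}}(d)$ for every $(\alpha,\beta)\neq(0,0)$. I would restrict to the subset of $\mathbb{T}_{s-1}$ on which this common factor is nonzero (the set $A$ in Case II, the set $B$ in Case III); there $\alpha F+\beta G$ equals, up to a nonvanishing scalar, $\prod_{i=1}^{l-1}(\beta^iX_1-X_{k+2})$ multiplied by a single residual linear form built from the two distinct linear factors of $F$ and of $G$. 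I would then exhibit a torus point at which this residual expression does not vanish: in the range $d\le\eta$ (where $k\le s-3$) a free coordinate not occurring in the shared product forces $\beta=0$ and then $\alpha=0$, while in Case III ($k=s-2$) a direct count shows the residual form can vanish on at most $l<q-1$ of the $q-1$ admissible values of the last coordinate, using $\beta^{q-2}\neq1$ since $\beta$ generates $K^*$. This is exactly what guarantees that the constructed pencils are genuinely two-dimensional, and hence that the upper bounds on $d_2$ are legitimate.
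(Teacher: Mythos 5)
Your proposal is correct and follows essentially the same route as the paper's own proof: the trivial range $d\ge r$ via Remark \ref{remark1}, achievability in the other two ranges via the explicit pairs constructed in Theorem \ref{theoremzeroes}, and matching lower bounds via Theorem \ref{otherbound} when $1\le d\le\eta$ and via the strict inequality $d_2>d_1$ together with the known minimum distance when $\eta<d<r$. The only difference is that you spell out the linear-independence (modulo $I_{\mathbb{T}_{s-1}}$) verification for the extremal pencils, which the paper merely asserts by exhibiting a sample point; that is a welcome elaboration rather than a different method.
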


\begin{proof} 
{\bf{Case I:}} $d \geq (q-2)(s-1)$. As the regularity index in this case is $(q-2)(s-1)$, then $C_{\mathbb{T}_{s-1}}(d)=K^{(q-1)^{s-1}}$. Therefore $d_2(C_{\mathbb{T}_{s-1}}(d))=2$, and the claim follows.

{\bf{Case II:}} $(q-2)(s-2) < d <(q-2)(s-1)$. In this case $k=s-2$. Let $F$ and $G$ be the polynomials given in the Case III of the proof of Theorem \ref{theoremzeroes}. Clearly, the corresponding codewords $\Lambda_F$ and $\Lambda_G$ (we use the notation given in the proof of Theorem \ref{theorem2} with $X=\mathbb{T}_{s-1}$) are linearly independent (because it is easy to find $[P] \in \mathbb{T}_{s-1}$ such that $f_{s-2,l}(P)=0$, but $g_{s-2,l}(P) \neq 0$, for example $[P]=[1:1: \cdots: \beta^{q-2}]$). Let $W$ be the subspace of $C_{\mathbb{T}_{s-1}}(d)$ generated by $\Lambda_F$ and $\Lambda_G$. Thus $\dim W=2$ and, by using Theorem \ref{theoremzeroes}, we obtain that
\begin{align*}
|{\mbox{supp}} \, (W)| & = |{\mbox{supp}} \, \{\Lambda_F, \Lambda_G\}| 
=|\mathbb{T}_{s-1}|-|Z_{\mathbb{T}_{s-1}}(F) \cap Z_{\mathbb{T}_{s-1}}(G)| \\
& =(q-1)^{s-1}-[(q-1)^{s-1}-q+l]=q-l.
\end{align*}

Therefore
$
d_2(C_{\mathbb{T}_{s-1}}(d)) \leq q-l
$. But for these values of $d$, $d_1(C_{\mathbb{T}_{s-1}}(d))=q-1-l$ (see Equation (\ref{eqmd})). As $d_2(C_{\mathbb{T}_{s-1}}(d)) > d_1(C_{\mathbb{T}_{s-q}}(d))$ we conclude that $d_2(C_{\mathbb{T}_{s-1}}(d)) \geq q-l$. Then $d_2(C_{\mathbb{T}_{s-1}}(d))=q-l$, and the claim follows.

{\bf{Case III:}} Let $1 \leq d \leq (q-2)(s-2)$. In this case $k \leq s-3$. If we take $F$ and $G$ as the polynomials defined in the Cases I or II of the proof of Theorem \ref{theoremzeroes} (depending on the value of $d$) and, similarly to the Case II above, $U$ is the subspace of $C_{\mathbb{T}_{s-1}}(d)$ generated by $\Lambda_F$ and $\Lambda_G$, then
\begin{align*}
|{\mbox{supp}} \, (U)| & = |{\mbox{supp}} \, \{\Lambda_F, \Lambda_G\}| =|\mathbb{T}_{s-1}|-|Z_{\mathbb{T}_{s-1}}(F) \cap Z_{\mathbb{T}_{s-1}}(G)|  \\
& =(q-1)^{s-1} -[(q-1)^{s-(k+3)}[(q-1)^{k+2}-(q-1)(q-l)+1]] \\
& =(q-1)^{s-1}-[(q-1)^{s-1}-(q-1)^{s-(k+3)}[(q-1) (q-l)-1]] \\
& = (q-1)^{s-(k+3)}[(q-1)(q-l)-1].
\end{align*}

Therefore
\begin{equation} \label{upperboundGHW}
d_2(C_{\mathbb{T}_{s-1}}(d)) \leq |{\mbox{supp}} \, (U)| 
=(q-1)^{s-(k+3)}[(q-1)(q-l)-1]. 
\end{equation}

On the other hand, let $\mathcal{D}$ be a subspace of $C_{\mathbb{T}_{s-1}}(d)$ with $\dim_K \mathcal{D}=2$. If $\{\Lambda_f,\Lambda_g\}$ is a $K$--basis of $\mathcal{D}$ then, by using Theorem \ref{otherbound}, we obtain that
$$
|{\mbox{supp}} \, (\mathcal{D})| \geq (q-1)^{s-(k+3)}[(q-1)(q-l)-1]. 
$$

Therefore
\begin{equation} \label{lowerboundGHW}
d_2(C_{\mathbb{T}_{s-1}}(d)) \geq (q-1)^{s-(k+3)}[(q-1)(q-l)-1].
\end{equation}

Equations (\ref{upperboundGHW}), and (\ref{lowerboundGHW}) prove case III. The claim follows from cases I, II, and III.
\end{proof}

\begin{remark} \label{finalremark}
It is easy to see that the formulae for the second generalized Hamming weight of the codes $C_{\mathbb{T}_{s-1}}(d)$ given by Equations (\ref{eqshw3}) and (\ref{eqshw}) can be reduced to
$$
d_2(C_{\mathbb{T}_{s-1}}(d))=d_1(C_{\mathbb{T}_{s-1}}(d))+\left\lceil (q-1)^{s-(k+3)}(q-2)\right\rceil,
$$
where $s \geq 2$, $d \geq 1$, $k$ and $l$ are the unique integers such that $d=k(q-2)+l$, $k \geq 0$, $1 \leq l \leq q-2$, and $d_1(C_{\mathbb{T}_{s-1}}(d))$ is given by Equation (\ref{eqmd2}).
\end{remark}

\begin{example} \label{example}
Let $K=\mathbb{F}_5$ be a finite field with 5 elements. For the codes $C_{\mathbb{T}_2}(d)$ with $1\leq d \leq 6$ (because the $a$--invariant is 5, see \cite[Lemma 1, (II)]{sarabia0}) we obtain the complete weight hierarchy by using Proposition \ref{proposition1}, Theorem \ref{theorem1}, Theorem \ref{theorem3}, and \cite[Theorem 3]{wei} (see Tables \ref{GHWT2} and \ref{GHWT3}). 

\begin{table}[!ht]
	\centering
		\begin{tabular}{||c||c||c||c||c||c||c||c||c||} \hline
			$d$ & $d_1$ & $d_2$ & $d_3$ & $d_4$ & $d_5$ & $d_6$ & $d_7$ & $d_8$ \\
			\hline 
			1 & 12 & 15 & 16 & -- & -- & -- & -- & -- \\ 
			\hline 
			2 & 8 & 11 & 12 & 14 & 15 & 16 & -- & -- \\
			\hline
			3 & 4 & 7 & 8 & 10 & 11 & 12 & 13 & 14 \\
			\hline
			4 & 3  & 4 & 6 & 7 & 8 & 9 & 10 & 11 \\
			\hline
			5 & 2 & 3 & 4 & 5 & 6 & 7 & 8 & 9 \\
			\hline
			6 & 1 & 2 & 3 & 4 & 5 & 6 & 7 & 8 \\ \hline
		\end{tabular} 
		\vspace{0.2cm}
	\caption{The first eight generalized Hamming weights for $C_{\mathbb{T}_2}(d)$, $q=5$, $1 \leq d \leq 6$.}
	\label{GHWT2}
\end{table}

\begin{table}[!ht]
	\centering
		\begin{tabular}{||c||c||c||c||c||c||c||c||c||} \hline
			$d$ & $d_9$ & $d_{10}$ & $d_{11}$ & $d_{12}$ & $d_{13}$ & $d_{14}$ & $d_{15}$ & $d_{16}$ \\
			\hline 
			1 & -- & -- & -- & -- & -- & -- & -- & -- \\ 
			\hline 
			2 & -- & -- & -- & -- & -- & -- & -- & -- \\
			\hline
			3 & 15 & 16 & -- & -- & -- & -- & -- & -- \\
			\hline
			4 & 12  & 13 & 14 & 15 & 16 & -- & -- & -- \\
			\hline
			5 & 10 & 11 & 12 & 13 & 14 & 15 & 16 & -- \\
			\hline
			6 & 9 & 10 & 11 & 12 & 13 & 14 & 15 & 16 \\ \hline
		\end{tabular}
		\vspace{0.2cm}
	\caption{The remaining generalized Hamming weights for $C_{\mathbb{T}_2}(d)$, $q=5$, $1 \leq d \leq 6$.}
	\label{GHWT3}
\end{table}

For example, if we take $d=2$, $a_{\mathbb{T}_2}-d=3$, $d_1(C_{\mathbb{T}_2}(3))=4$ (see \cite[Theorem 3.5]{ci-codes}), and $\beta=H_{\mathbb{T}_2}(2)=6$ (see \cite[Lemma 1, (III)]{sarabia0}). Then, by Theorem \ref{theorem1},
$$
d_{6-i} (C_{\mathbb{T}_2}(2))=16-i \,\,{\mbox{for all}} \,\, i=0,\ldots,2.
$$

Moreover, if we take $d=3$, $d_1(C_{\mathbb{T}_2}(2))=8$ then
$$
d_{10-i} (C_{\mathbb{T}_2}(3))=16-i \,\,{\mbox{for all}} \,\, i=0,\ldots,6.
$$

Notice that $C_{\mathbb{T}_2}(3)$ is equivalent (we use the definition given in \cite[Remark 1]{sarabia7}) to the dual code of $C_{\mathbb{T}_2}(2)$. Therefore if we use Theorem \ref{theorem3}, and the Duality Theorem (see \cite[Theorem 3]{wei}) we obtain that
$$
d_2(C_{\mathbb{T}_2}(2))=11 \,\,\, {\mbox{and}} \,\,\, d_3(C_{\mathbb{T}_2}(2))=12,
$$
and thus we get the six Hamming weights of $C_{\mathbb{T}_2}(2)$. It is important to comment that we use Macaulay2 \cite{macaulay} to check some computations.
\end{example}

\begin{remark}\label{dec-19-17}
Let $X$ be the toric set parameterized by the edges of the complete
bipartite graph $\mathcal{K}_{m,n}$, that is, if $V_1=\{Z_1,\ldots,Z_m\}$,
$V_2=\{Z_{m+1},\ldots,Z_{m+n}\}$ is the bipartition of $\mathcal{K}_{m,n}$, then
\begin{align*}
X= \{[t_1t_{m+1} : \cdots : &\,  t_1t_{m+n} : t_2t_{m+1} : \cdots: t_2t_{m+n} : \cdots : \\
& t_mt_{m+1} : \cdots : t_mt_{m+n}] \in \mathbb{P}^{mn-1} : t_i \in K^*
\}.
\end{align*}

Equations (\ref{eqmd2}) and (\ref{eqshw}) allow us to compute the second generalized Hamming weight of the code $C_X(d)$ because this code is $C_{\mathbb{T}_{m-1}}(d) \otimes C_{\mathbb{T}_{n-1}}(d)$ (see \cite{sarabia1}). Actually
$$
d_2(C_X(d))=\min \{d_1(C_{\mathbb{T}_{m-1}}(d)) \cdot d_2(C_{\mathbb{T}_{n-1}}(d)), 
d_1(C_{\mathbb{T}_{n-1}}(d)) \cdot d_2(C_{\mathbb{T}_{m-1}}(d))\}.
$$
\end{remark}

\smallskip

\noindent {\bf Acknowledgments.} We thank the referees for a careful
reading of the paper and for the improvements that they suggested.


\end{document}